%% file: main.tex
\documentclass[a4paper]{article}
\usepackage[margin=2.5cm]{geometry}
\usepackage[]{amsthm} 
\usepackage{amsmath}
\usepackage{amssymb}
\usepackage{graphicx}
\usepackage{subcaption}
\usepackage{relsize}
\usepackage{wrapfig}
\usepackage{thmtools} 
\usepackage{enumitem}
\usepackage{xspace}
\usepackage{mdframed}
\usepackage{url}
\usepackage{bm}
\usepackage[numbers]{natbib}%
\definecolor{darkgreen}{rgb}{0, 0.5, 0}
\definecolor{darkred}{rgb}{0.5, 0, 0}
\usepackage[%
    colorlinks,
    citecolor=darkgreen,%
    linkcolor=darkred,%
    pdfborder={0 0 0}%
]{hyperref}
\usepackage{cleveref}

\newcommand{\Lang}{{L}}
\newcommand{\A}{\mathcal{A}}
\newcommand{\BigO}{{O}}
\newtheorem{theorem}{Theorem}
\newtheorem{definition}{Definition}
\newtheorem{hypothesis}{Hypothesis}

\newtheorem{lemma}{Lemma}
\newtheorem{corollary}{Corollary}
\newtheorem{observation}{Observation}
\theoremstyle{definition}
\newtheorem{example}{Example}
\newtheorem*{remark*}{Remark}

\title{Intersecting Dense Automata}

\author{Dmitry Chistikov\\ 
University of Warwick\\United Kingdom\\\texttt{d.chistikov@warwick.ac.uk} \and Neha Rino\\ 
University of Warwick\\United Kingdom\\\texttt{neha.rino@warwick.ac.uk}}
\date{}

\DeclareFontFamily{U}{wncy}{}
\DeclareFontShape{U}{wncy}{m}{n}{<->wncyr10}{}
\DeclareSymbolFont{mcy}{U}{wncy}{m}{n}
\DeclareMathSymbol{\Sh}{\mathord}{mcy}{"58}
\DeclareMathSymbol{\sh}{\mathord}{mcy}{"78}
\DeclareMathOperator*{\Shuffle}{\Sh} 
\newcommand{\shuffle}{\mathbin{\sh}}

\newcommand{\newextmathcommand}[2]{%
    \newcommand{#1}{\ensuremath{#2}\xspace}
}

\newcommand{\renewextmathcommand}[2]{%
    \renewcommand{#1}{\ensuremath{#2}\xspace}
}
\newextmathcommand{\nodprod}{\mathcal{A}_{\textit{nodding}}}
\newextmathcommand{\echoprod}{\mathcal{A}_{\textit{echoing}}}
\newextmathcommand{\catchprod}{\mathcal{A}_{\textit{catchup}}}
\newextmathcommand{\frogprod}{\mathcal{A}_{\textit{leapfrog}}}
\newextmathcommand{\dirpr}{\mathcal{A} \times \mathcal{B}}
\newcommand{\twoaut}{$\mathcal{A}$ and $\mathcal{B}$\xspace}
\newextmathcommand{\knfa}{\mathcal A_0, \ldots,  \mathcal A_{k-1}}
\newextmathcommand{\shortint}{\bigcap_{i\in [k]} \Lang(\A_i)}
\newextmathcommand{\longint}{L(\mathcal A_0) \cap \ldots \cap L(\mathcal A_{k-1})}
\newextmathcommand{\binint}{L(\mathcal{A}) \cap L(\mathcal{B})}
\newextmathcommand{\klangs}{\Lang_0, \ldots, \Lang_{k-1}}
\newextmathcommand{\prodqi}{\prod_{i\in [k]}Q_i}
\newextmathcommand{\qvec}{q_0, \dots, q_{k-1}}

\newcommand{\sset}{\subseteq}

\renewcommand{\emptyset}{\varnothing}
\renewextmathcommand{\epsilon}{\varepsilon}

\begin{document}

\maketitle

\begin{abstract}
\input{abstract}
\end{abstract}
\input{intro}

\input{prelim}
\input{constructions}
\input{deciding}
\input{certificates}
\input{extras}

\section*{Acknowledgements}
We are grateful to B\'eatrice B\'erard, Christoph Haase, Lia Sch{\"u}tze, and Michael Wehar for useful discussions, as well as to everyone who gave us feedback at various
stages of this work.

We would also like to thank
the Centre for Discrete Mathematics and its Applications (DIMAP) and
the Department of Computer Science, at the University of Warwick.
DC is supported by the Engineering and Physical Sciences Research Council [EP/X03027X/1].
NR is supported by
the Feuer International Scholarship in Artificial Intelligence by University of Warwick alumnus Jonathan Feuer.

\bibliographystyle{plainurl}
\bibliography{NFABib.bib}

\appendix
\input{cartesian}
\input{proof-shuffle-observation}
\input{leapfrogproof}
\input{relationsatisfactionproof}

\end{document}

%% file: abstract.tex
We observe that the classical Cartesian product construction for the
intersection of (languages of) nondeterministic finite automata (NFA) is
non-optimal in the worst case, if the automata have many transitions.
For a fixed alphabet, the product of two NFA may have $\Theta(m^2)$ transitions
if these NFA have at most $n$ states and $m$ transitions each.

We describe alternative constructions with $O(m n)$ transitions:
or $O(m n^{k-1})$ for the intersection of $k$ NFA
(for fixed $k \ge 2$ and alphabet~$\Sigma$).
This gives a faster algorithm for deciding NFA intersection emptiness.
The new algorithm is optimal, unless there exists a breakthrough
combinatorial algorithm for detecting $(k+1)$-cliques in undirected graphs.
This also leads to a more efficient certification scheme for NFA
intersection emptiness.

%% file: intro.tex
\section{Introduction}
\label{sec:intro}

The direct product construction for finite automata is a cornerstone of formal
language and automata theory~\cite{HMU}.
Given automata \twoaut with sets of states
$P$ and $Q$, respectively,
the product automaton \dirpr has as set of states
the Cartesian product $P \times Q$.
For each letter $a$ in the input alphabet~$\Sigma$, it contains
a transition $(p,q) \to (p',q')$ iff
$\mathcal{A}$ contains a transition $p \to p'$ and
$\mathcal{B}$ contains a transition $q \to q'$, both labelled by~$a$.
Thus, \dirpr simulates \twoaut simultaneously.
This construction shows
that the family of regular languages is closed under intersection:
$L(\dirpr) = L(\mathcal{A}) \cap L(\mathcal{B})$
for an appropriate choice of initial and final states.

The direct product of two deterministic finite automata (DFA) is deterministic,
but the construction applies to nondeterministic finite automata (NFA) too.
In fact, direct product constructions for more sophisticated forms of automata
are based on the same idea.

The number of transitions in the product automaton is what determines
the worst-case complexity of algorithms for
\emph{NFA intersection emptiness}, that is, those deciding
whether two given NFA accept a word in common.
In the model checking of finite-state systems, deciding NFA intersection emptiness
is a core sub-procedure in the verification of regular safety properties~\cite{KV01}.

It is well-known that the number of states in the automaton
\dirpr cannot, in general, be reduced:
that is, there exist finite automata \twoaut,
with $n$~states each, such that
every DFA (and even NFA) for the language \binint
has at least $n^2$ states (see, e.g.,~\cite{GruberHK-handbook}).
In this paper, we show that an analogous statement for the number of
transitions is false.
In other words, the direct product construction (for NFA) does \emph{not}
always produce automata with the smallest possible number of transitions.

Let NFA \twoaut have $n$ states each.
If they contain,
for some $a \in \Sigma$, many transitions labelled by~$a$ (are \emph{dense}),
then the number of transitions with this label in \dirpr may be as high as $n^4$.
We prove that, for any fixed input alphabet,
$O(n^3)$ transitions always suffice for a suitable NFA to recognise \binint,
with the number of states remaining $O(n^2)$ as in the direct product \dirpr.

The contributions of this paper are as follows.
We consider the intersection of languages recognised by
$k \ge 2$ NFA \knfa, each with at most $n$~states and $m$~transitions.\footnote{%
    We start numbering from $0$ instead of $1$.
    This will be convenient for our technical development.
}

\medskip



\textbf{1.}
We give several constructions of NFA
for the language \longint
that are sparser than the classical direct product NFA.
If $k \ge 2$ and the alphabet~$\Sigma$ are fixed,
our NFA have $O(n^k)$ states and $O(m n^{k-1}) = O(n^{k+1})$ transitions.
(We also give more precise estimates, tracking the dependence on $k$ and $|\Sigma|$
 without any assumptions about these parameters.)
This is an improvement in the number of transitions when compared with
the direct product, for which the tight upper bound is $m^k = O(n^{2 k})$.
The improvement is achieved at the cost of a constant-factor increase
in the number of states.
The idea of our construction is (for $k = 2$) to avoid the coupling of transitions of
the NFA $\mathcal A_0$ and $\mathcal A_1$ and instead \emph{interleave}
their runs.

\smallskip

\textbf{2.}
The construction of these sparser NFA immediately leads
to a faster algorithm to decide the emptiness of intersection \longint,
running in time $O(m n^{k-1}) = O(n^{k+1})$ if $k$ and $|\Sigma|$ are fixed.
Even for $k = 2$ this algorithm's running time, $O(m n)$,
improves over the running time of the naive algorithm based on the direct product,
$O(m^2)$ or $O(n^4)$.
For the latter algorithm,
dense NFA --- with many more transitions than states --- are the worst case.

\smallskip

\textbf{3.}
We show that our new algorithm is conditionally optimal, in the following sense.
For every $k \ge 3$,\footnote{%
    For $k = 2$, the reduction from triangle detection
    is already known~\cite{PotechinS20}; see Related Work below.
}
if there exists an algorithm which can determine, in time $O((m n^{k-1})^{1-\epsilon})$
for some constant $\epsilon > 0$, whether $\longint = \emptyset$ for NFA (or even DFA) \knfa
and which avoids fast matrix multiplication techniques (such as Strassen's), then
there also exists a breakthrough algorithm that detects the existence of $(k+1)$-cliques
in undirected graphs $G = (V, E)$ in time $O(|V|^{k - \delta})$ for some
constant $\delta > 0$
without using fast matrix multiplication.
No such algorithm is known, and in fact it has been conjectured it does not exist%
~\cite{AbboudBW15a}.

\smallskip

\textbf{4.}
We also show that intersection emptiness can be certified (and checked)
more efficiently, in the worst case, than by running the decision algorithm
from item~{2} above.
If $\longint = \emptyset$, there exists a certificate $C$ that witnesses
this fact, and, using fast matrix multiplication,
this certificate can be verified in time $O(n^{k + \omega - 2})$
for fixed $k$ and $|\Sigma|$. (Here $\omega \le 2.38$ is the infimum of all $c>0$
for which there exists an $O(n^c)$-time algorithm to multiply two $n \times n$ matrices.)
Thus, the verification time beats the worst-case estimate $O(n^{k+1})$ from
item~{2} for dense enough NFA.

\medskip

Our most elementary construction of sparser NFA for \longint
(dubbed \emph{nodding product})
relies on $\epsilon$-transitions and
has at most $k |\Sigma| \cdot  n^k$ states and $k \cdot mn^{k-1}$ transitions.
The NFA intersection emptiness algorithm is a search in its transition graph.
If we require an NFA (not an $\epsilon$-NFA) for the intersection language,
our constructions
(\emph{catch-up product} and \emph{leapfrog product})
suffer a factor of $|\Sigma|^{k}$ (or $|\Sigma|^{k-1}$;
in any case constant if $k$ and $\Sigma$ are fixed) growth
in the number of states and transitions compared to the nodding product.
The resulting NFA is still smaller than the direct product as long as $|\Sigma| = o(n)$.

\subsection*{Related work}

\paragraph*{Constructions of sparser NFA; faster algorithm.}
The main ingredient of our first two contributions
is well-known:
the interleaving of equal-length words
where letters alternate~\cite[Exercise~4.2.7]{HMU}
has been studied under several names, `perfect shuffle'
and `literal shuffle' among them~\cite{Berardshuffle,Berardformal,HenshallRS12,Hoffmann25}.

Nevertheless, we have been unable to find these results in the literature.
The closest is the work of
Boigelot and Wolper~\cite{BoigelotWolper},
who interleave transitions (rather than fire them simultaneously)
for automata that encode sets of satisfying assignments to formulas
of linear arithmetic.
This reduces the size of description of the product automaton
in this application.
However,
both the constructions of sparser NFA in general and the faster algorithm
for NFA intersection emptiness appear to be new. 
For example, a recent textbook by Esparza and Blondin
gives the direct product--based algorithm for the intersection
of two NFA, with running time bounded by the product of the number of
transitions in them~\cite[Sec.~3.2.3]{EsparzaBlondin}.

Baier and Katoen's monograph on model checking~\cite[Chap.~4]{BK}
also reduces the NFA intersection problem to a reachability check
in the direct product.
For the verification of regular safety properties,
Baier and Katoen
(following Kupferman and Vardi~\cite{KV01}) describe
how to apply the direct product construction
to a finite labelled transition system
and an NFA that recognises
the language of `bad' prefixes for the property.
When transitions of the LTS are re-labelled with
subsets of atomic propositions, rather than with actions,
the usual product construction on NFA can be applied;
see Vardi and Wolper~\cite{VW-LICS86}.
The complexity of verification algorithm is bounded from
above by the product of the number of transitions
in the system and the NFA~\cite[Thm.~4.22]{BK}.
Our new algorithm would in this case traverse a graph
with fewer transitions than this pessimistic estimate,
reducing the complexity.

\paragraph*{Fine-grained complexity: lower bounds on the computational complexity.}
Optimising the degree of polynomials, $c$,
in (polynomial) running time bounds of the form $O(n^c)$, where
$n$ is the size of the input to an algorithmic problem, is a goal of
\emph{fine-grained complexity} research; see, e.g.,~\cite{WilliamsFG2018,Bringmann19}.
This field complements faster algorithms by lower bounds,
albeit conditional.

The first such bounds for the intersection emptiness problem
for $k$ DFA (\emph{DFA~$k$-IE})
were obtained by Kasai and Iwata~\cite{KasaiIwata},
who observed that
the intersection of $k$~DFA can simulate a nondeterministic $k \log n$ space--bounded Turing machine;
see also an earlier paper by Kozen~\cite{Kozen}.
Karakostas, Lipton, and Viglas~\cite{KarakostasLV03} prove
that if there exists an $n^{o(k)}$-time algorithm for DFA $k$-IE,
then there are also
deterministic algorithms for solving subset sum and factoring $n$-bit integers in
$2^{\epsilon n}$ time for every $\epsilon > 0$.
This implication can be considered a conditional lower bound
on the complexity of DFA $k$-IE.
Wehar~\cite{Wehar14} later showed the
existence of $f(k) = o(k)$ such that, if DFA $k$-IE has
an $n^{f(k)}$ algorithm, then $\textsf{NL} \ne \textsf{P}$.
A different angle on the problem was taken by Wareham~\cite{Wareham00}, who
characterised the parameterised complexity of DFA $k$-IE
for words of specified length, and related problems.

Connections have also been discovered between DFA $k$-IE
and Boolean satisfiability, SAT, in particular using the prominent
Exponential Time Hypothesis (ETH) and Strong
Exponential Time Hypothesis (SETH;
 see~\cite{IP01} and, e.g.,~\cite{WilliamsFG2018,Bringmann19}).
Fernau and Krebs~\cite{DBLP:journals/algorithms/FernauK17} proved
that an $n^{o(k)}$ algorithm for DFA $k$-IE would contradict ETH.
Wehar~\cite[Cor.~7.19]{wehar2017complexity} proved that,
for every fixed $k$, an $\BigO(n^{k - \epsilon})$-time algorithm for DFA $k$-IE would contradict
SETH~\cite{IP01}.
We discuss connections with SETH and its stronger nondeterministic variant
NSETH~\cite{CarmosinoGIMPS16} in \Cref{sec:certificates}.

For the cases $k = 2$ and $k = 3$, de Oliveira Oliveira and Wehar~\cite{OliveiraW18}
show reductions between DFA $k$-IE, triangle detection in undirected graphs,
and the 3-SUM problem.
Potechin and Shallit~\cite{PotechinS20} reduce
triangle detection to the special case of intersection emptiness of two NFA:
an apparently easier problem (\emph{NFA acceptance}) of deciding whether
a given input word $w\in \Sigma^*$ is accepted by a given NFA~$\mathcal A$.
This reduction works even with the unary alphabet ($|\Sigma| = 1$).

In contrast with a substantial body of work on DFA~$k$-IE,
the natural analogous problem for NFA has apparently not received
much attention. As our results show, the complexity landscape
for NFA $k$-IE turns out to be different.

The usage of fast matrix multiplication as part of the verifier algorithm in
our certification scheme raises the question of whether this technique
could also speed up the decision procedures for NFA intersection emptiness.
This question appears to be rather difficult.
Even for the NFA acceptance problem
it is unknown if fast matrix multiplication techniques could be used to
obtain an algorithm with worst-case running time $O((|w| \cdot m)^{1-\epsilon})$
for some $\epsilon > 0$.
Here $w$ is the length of the input word and $m$ the number of transitions
in the NFA.
Bringmann et~al.~\cite{BringmannGKL24} recently conjectured that
no such algorithm exists.

%% file: prelim.tex
\section{NFA intersection}
\label{sec:prelim}
	\begin{definition}[NFA]
		A nondeterministic finite automaton $\A$ is a tuple $(Q, \Sigma, \Delta, s , F)$ consisting of a finite nonempty set of states $Q$, a finite input alphabet $\Sigma$, an initial state $s \in Q$, a set of final states $F \subseteq Q$, and a transition relation $\Delta \subseteq Q \times \Sigma \times  Q$. 
	\end{definition}

A \emph{deterministic finite automaton} (DFA) is an NFA whose transition relation is in fact a partial function $\delta \colon Q \times \Sigma \to Q$. 

Every NFA is associated with a labelled graph called its transition graph. 
The vertices of the transition graph are $Q$, edge labels belong to $\Sigma$, and the labelled edge relation is given by $\Delta$.

A \emph{run} $\rho$ of an NFA $\A$ on a word $w \in \Sigma^*$ is a string $t_0t_1\dots t_r \in \Delta^*$, i.e., a string of transitions with the following properties: 
\begin{enumerate}
    \item It starts at the initial state, i.e., $t_0 \in \{s\}\times \Sigma \times Q$. 
    \item Let each $t_i = (q_i, \sigma, q'_i)$. The unlabelled versions of the transitions $(q_0, q'_0)(q_1, q'_1)\dots (q_r, q'_r)$ form a path, i.e., for all $i \in \{0, \ldots, r-1\}$, $q'_i = q_{i+1}$ . 
    \item The concatenation of the labels of the transitions form $w$, i.e.,  $\sigma_0\sigma_1\dots \sigma_r = w$. 
\end{enumerate}

We assume that an empty run (corresponding to the empty word in $\Delta^*$) starts and ends at the initial state. 
The destination of the last transition, $q'_r$, is said to be \emph{reached} (from the initial state~$s$).
The term \emph{accessible part} (of an automaton) refers to the states that can be reached from the initial state, and to their outgoing transitions. 

 An \emph{accepting} run is a run that ends in a final state. 
 The NFA $\A$ \emph{accepts} a word $w$ if there is an accepting run of $\A$ on $w$. 
The \emph{language} of an NFA, $\Lang(\A)$, is $\{w \in \Sigma^* \mid \text{$\A$ has an accepting run on $w$}\}$.

We assume that $\Delta$ is represented as a list of adjacency lists. 
Given an enumeration of the states $Q = \{q_0, \dots, q_{n-1}\}$, for each $\sigma \in \Sigma$, the adjacency matrix $\Delta(\sigma)$ is an $n \times n$ matrix such that $\Delta(\sigma)[i,j]$ is $ 1 $ if $(q_i, \sigma, q_j) \in \Delta$, and $\Delta(\sigma)[i,j]$ is $0$ otherwise. 

	\begin{definition}[$\varepsilon$-NFA]
		An $\varepsilon$-nondeterministic finite automaton $\A$ is a tuple $(Q, \Sigma, \Delta, s , F)$ consisting of a finite nonempty set of states $Q$, a finite input alphabet $\Sigma$, an initial state $s \in Q$, a set of final states $F \subseteq Q$, and a transition relation $\Delta \subseteq Q \times (\Sigma \cup \{\varepsilon\}) \times  Q$, where $\varepsilon$ is the empty word. 
	\end{definition}

Runs, accepting runs, and the language accepted by an $\varepsilon$-NFA are defined analogously. 

\begin{definition}[NFA $k$-IE]
    For a fixed $k$, given $k$ NFA $\knfa$ with at most $n$ states and $m$ transitions each, over an $\ell$-letter alphabet, the NFA $k$-Intersection Emptiness (NFA $k$-IE) problem asks,  is $\shortint = \emptyset$?
\end{definition}

\begin{mdframed}
Notation ($k$, $\ell$, $m$, $n$).\\
Throughout the paper, we consider the intersection
of languages of $k$~automata \knfa, each with:
\begin{itemize}
\item input alphabet $\Sigma$ of size $\ell$,
\item at most $m$ transitions, and
\item at most $n$ states.
\end{itemize}
For simplicity, \textbf{we assume} $m \ge n$ (rather than $m \ge n-1$,
which holds if all states in the transition graph of each $\A_i$
can be reached from the initial state).
\end{mdframed}

\smallskip

When discussing the complexity of our constructions and algorithms,
we give `uniform' bounds, with parameters $m$ and $n$ across all
the NFA \knfa.
An interested reader will easily refine these bounds
for the scenario where some NFA are much bigger than the other ones.

The direct product construction for NFA is recalled in \Cref{sec:CPA}.

%% file: constructions.tex
\section{Sparser automata for \mbox{NFA intersection}}\label{sec:sparser}

In this section we describe new product constructions
for NFA. These can be sparser than the standard direct product;
this is in particular true if $k, \ell = O(1)$.

For $k \geq 2$, we denote $[k]= \{0, \dots , k-1\}$. 

To begin with, we recall a few basic operations on
words and languages.
Let $\Sigma$ be a finite alphabet.
    Given a word $w = x_0 \dots x_{t-1} $  where $x_j \in \Sigma$, and nonnegative integers $i, k$ with $i < k$,
    the \emph{$i \bmod k$ restriction} is the word
    \begin{equation*}
    w{\upharpoonright}_{i \bmod k} = x_i x_{i+k} \dots x_{i + kt'}
    \end{equation*}
    where  $t'$ is the largest integer such that $i + k t' < t$.
    For languages, $\Lang{\upharpoonright}_{{i \bmod k }}  = \{w{\upharpoonright}_{i \bmod k} \mid w \in \Lang\}$.

    The \emph{interleaving of two languages} $\Lang_0, \Lang_1 \subseteq \Sigma^*$ is
    \begin{align*}
    \Lang_0 \shuffle \Lang_1 =
    \bigl\{x_0y_0\dots x_{t-1} y_{t-1} \mid\,  x_0 \dots x_{t-1} \in \Lang_0 \text{ , } y_0 \dots y_{t-1} \in \Lang_1  \text{ and } x_i, y_i \in \Sigma \text{ for all }i\in [t] \bigr\}.
    \end{align*}
    That is, all pairs of words from $\Lang_0$ and $\Lang_1$ are interleaved letter by letter (like a perfectly riffle shuffled deck of cards).
    For $k > 2$, given languages $\Lang_0, \Lang_1, \dots , \Lang_{k-1} \subseteq \Sigma^*$, their \emph{interleaving}, denoted by \raisebox{0pt}[1ex][0pt]{$\Shuffle\limits_{i\in [k]} \Lang_i$}, is the language
    \begin{align*}
    \bigl\{w \in \Sigma^* \mid\, & \text{$w{\upharpoonright}_{i \bmod k} \in \Lang_{i}$ for all $i \in [k]$ and $|w| \equiv 0 \bmod k$}\bigr\}.
    \end{align*}
As is well-known, if $\klangs$ are regular
and $i \in [k]$, then languages
\raisebox{0pt}[1ex][0pt]{$ \Shuffle\limits_{i\in [k]} \Lang_i $} and $\Lang_0{\upharpoonright}_{{i \bmod k }}$ are regular too.

\begin{observation}\label{obs:fact} 
Let $\klangs \sset \{a_0, \ldots, a_{\ell-1}\}^*$. Then
\begin{equation*}
\bigcap_{i \in [k]} L_i = [(\Shuffle_{i\in [k]} \Lang_i ) \cap (a_0^k + \dots + a_{\ell-1}^k)^*]\mathlarger{{\upharpoonright}}_{0 \bmod k}\;.
\end{equation*}
\end{observation}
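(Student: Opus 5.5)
We prove the set equality by double inclusion, writing $D = (a_0^k + \dots + a_{\ell-1}^k)^*$ for the language of words made of blocks of $k$ equal letters. The key fact is a description of $D$ in terms of restrictions. A word $w$ lies in $D$ iff $|w| \equiv 0 \bmod k$ and all $k$ restrictions coincide, that is, $w{\upharpoonright}_{i \bmod k} = w{\upharpoonright}_{0 \bmod k}$ for every $i \in [k]$. We verify this first, directly from the definitions. Write $w = x_0 \dots x_{t-1}$ with $t = kr$. Then $w \in D$ iff each block $x_{kj} x_{kj+1} \dots x_{kj+k-1}$ is constant. This holds iff $x_{kj+i} = x_{kj}$ for all $j < r$ and $i \in [k]$, and that is exactly the coincidence of the restrictions, since for $|w| = kr$ every restriction $w{\upharpoonright}_{i \bmod k}$ has length exactly $r$.

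\emph{Inclusion $\subseteq$.} Take $u = y_0 \dots y_{r-1} \in \bigcap_{i\in[k]} L_i$. Let $w = y_0^k y_1^k \dots y_{r-1}^k$, which lies in $D$ by construction. By the fact above, $w{\upharpoonright}_{i \bmod k} = u$ for every $i$, and this belongs to $L_i$. Moreover $|w| = kr \equiv 0 \bmod k$, so $w \in \Shuffle_{i\in[k]} L_i$. Finally $w{\upharpoonright}_{0 \bmod k} = u$. The empty word needs no separate treatment: when $r = 0$ we get $w = \epsilon$, and all the conditions hold trivially.

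\emph{Inclusion $\supseteq$.} Take $w \in (\Shuffle_{i\in[k]} L_i) \cap D$ and let $u = w{\upharpoonright}_{0 \bmod k}$. Membership in the interleaving gives $w{\upharpoonright}_{i \bmod k} \in L_i$ for every $i$. Membership in $D$, via the fact above, gives $w{\upharpoonright}_{i \bmod k} = u$. Hence $u \in L_i$ for all $i$.

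Nothing here is a genuine obstacle. The only point needing care is the bookkeeping in the definition of ${\upharpoonright}_{i \bmod k}$: when $|w|$ is a multiple of $k$, all restrictions have the same length $|w|/k$. That is why the condition $|w| \equiv 0 \bmod k$ built into the interleaving, together with $D$, makes the identification of restrictions exact. For $k = 2$ the paper's separate binary definition of $\shuffle$ agrees with the general one, so the same argument covers that case.
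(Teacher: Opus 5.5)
Your proof is correct and follows the same double-inclusion argument as the paper: for $\subseteq$ you use the $k$-stuttering $y_0^k \dots y_{r-1}^k$ of a word in the intersection, which is exactly the paper's unique element of the interleaving of $k$ copies of $\{w\}$; for $\supseteq$ you use that all $i \bmod k$ restrictions of a word in $(a_0^k + \dots + a_{\ell-1}^k)^*$ coincide. The only difference is that you state and verify this characterisation of $(a_0^k + \dots + a_{\ell-1}^k)^*$ explicitly, including the length bookkeeping, whereas the paper leaves it implicit.
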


We will not rely on \Cref{obs:fact} or the definition of $\Shuffle$ explicitly, but
they are at the core of our constructions.
For completeness, the proof of \Cref{obs:fact} is given
in Appendix~\ref{sec:interleavingproof}.

Throughout the rest of this section we assume that parameters $n, m, k, \ell$
and NFA \knfa are defined as in Section~\ref{sec:prelim}.

\subsection{Nodding product automaton}
\label{sec:nod}
\paragraph{Idea of the construction.}
The nodding product automaton $\nodprod$ is an  $\varepsilon$-NFA that recognises $\shortint$, while being sparser than the direct product automaton. 
The idea is to use \Cref{obs:fact}.
Take the standard NFA for the interleaving of $k$ languages, then intersect it with $ (a_0^k + a_1^k + \dots + a_{\ell-1}^k)^*$, and finally apply the $0 \bmod k$ restriction. 
All three operations preserve regularity, yielding an NFA for the intersection. 

Intuitively, in rounds, the first automaton $\A_0$ announces the next letter, and the other automata silently nod (read $\varepsilon$) in turn.

The benefit of the construction is that the NFA for the language $\Shuffle_{i\in [k]} \Lang(\A_i) $
is much sparser than the direct product NFA $\A_0 \times \ldots \times \A_{k-1}$.
In the interleaving, the `component automata' move one at a time, and therefore
only $\BigO(k \cdot mn^{k-1})$ transitions are needed. 
The resulting automaton is $k$-partite, so the $0 \bmod k$ restriction does not require more states, as long as $\varepsilon$-transitions can be used. 
Overall, this approach yields a $\BigO(k \ell \cdot m n^{k-1})$ sized $\varepsilon$-NFA. 
Our construction below slightly improves this bound, removing the $\ell$ factor.

\paragraph{Definition of $\nodprod$.} 
Consider first the special case of
two NFA $\A_0$ and $\A_1$ over the alphabet $\Sigma = \{a, b, c\}$.
We define
$$ \nodprod = (Q, \Sigma, \Delta, s , F ).$$ 
Here the states are $Q = Q_0 \times Q_1 \times (\Sigma \cup \{\varepsilon\})$.
The initial state is $s =(s_0, s_1, \varepsilon)$ and the set of final states is $F= F_0 \times F_1 \times \{\varepsilon\}$. 
The set $\Delta$ consists of the following transitions:
\begin{itemize}
\item
$(p_0, q, \epsilon) \xrightarrow[]{\sigma} (p_1, q, \sigma) $ whenever  $p_0 \xrightarrow[]{\sigma} p_1$ in $\A_0$,  
\item
$(p, q_0, \sigma) \xrightarrow[]{\varepsilon} (p, q_1, \epsilon) $  whenever  $q_0 \xrightarrow[]{\sigma} q_1$ in $\A_1$. 
\end{itemize}
Here $\sigma$ ranges over $\Sigma$.

In words, $\nodprod$ has a copy of $Q_0 \times Q_1$ for each letter in $\{a, b, c\}$ as well as a \emph{base} copy of $Q_0 \times Q_1$, labelled by $\varepsilon$.  
Given two copies of $Q_0 \times Q_1$, we refer to the set of all transitions from the first copy to the second as a \emph{volley} (as in `a volley of arrows').
The four copies of $Q_0\times Q_1$ are arranged in three different $2$-cycles, or petals, where each petal consists of the base copy and a letter-labelled copy, connected in either direction by volleys; see \Cref{fig:nodding}. 
The petals are glued at the common base copy, like a flower. 
For each  $\sigma \in \Sigma$,  the volley from the base copy to the $\sigma$-copy  reads $\sigma$ and updates the $Q_0$ component of the state as if $\A_0$  read $\sigma$. 
The volley from the $\sigma$-copy back to the base copy reads $\varepsilon$ while updating the $Q_1$ component as if $\A_1$ read $\sigma$. 
The structure of $\nodprod$ mimics that of the NFA $\mathcal{C}$ that recognises the language $(aa + bb + cc)^*$; see \Cref{fig:aabbcc}. 

\begin{figure*}

    \centering
    \begin{subfigure}[t]{0.25\textwidth}
        \centering
        \includegraphics[width=0.9\textwidth]{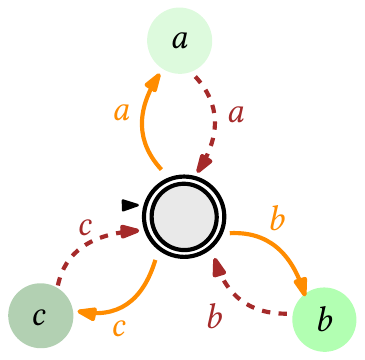} 
\caption{\rlap{NFA for $(aa + bb + cc)^*$}\phantom{NFA for $0123456$}}
    \label{fig:aabbcc}    
    \end{subfigure}%
    ~ 
    \begin{subfigure}[t]{0.65\textwidth}
        \centering
        \includegraphics[width=0.9\textwidth]{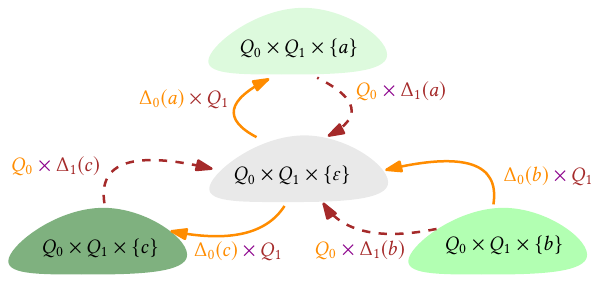} 
\caption{NFA $\nodprod$}    \label{fig:nodding}    
    \end{subfigure}

     \caption{Nodding product automaton}

\end{figure*}

In general, given $k \ge 2$ NFA $\knfa$ over an $\ell$-letter alphabet $\Sigma$, the $\varepsilon$-NFA $\nodprod$ contains $(k-1)\ell$ copies of the direct product $ \prodqi$, and a base copy.
These copies are arranged in $\ell$ different $k$-cycles or petals which share the base copy. 
We number volleys from $0$ to $k-1$. 
The zeroth volley in each petal is the outgoing volley from the base copy, proceeding in order till the $(k-1)$th volley which is the incoming volley to the base copy in this petal. 
For each  $\sigma \in \Sigma$, in the $\sigma$-petal the first volley reads $\sigma$ and updates the $Q_0$ component as if $\A_0$ read $\sigma$. 
For each $i \in [1,k-1]$, the $i$th volley in the $\sigma$-petal reads the empty string $\varepsilon$ while updating the $Q_{i-1}$ component as if $\A_{k-1}$ read $\sigma$. 

\begin{theorem}\label{thm:nodding}
    The $\varepsilon$-NFA $\nodprod$: \begin{itemize}
        \item recognises $\shortint$,
        \item has at most $(k \ell - \ell + 1)\cdot  n^k$ states and at most $k \cdot mn^{k-1}$ transitions, 
        \item  and has accessible part which can be constructed in $\BigO(k \cdot mn^{k-1})$ time.
    \end{itemize}
\end{theorem}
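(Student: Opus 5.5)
Note: the general description seems to have a typo: the $i$th volley should update the $Q_i$ component as if $\A_i$ read $\sigma$. I will read it that way: states are $\prodqi \times (\{\varepsilon\}\cup(\Sigma\times[1,k-1]))$, where $(\sigma,j)$ means ``letter $\sigma$ announced, components $0,\dots,j-1$ have moved''. For $j\in[1,k-1]$ the transitions are
\[
(\ldots,q_j,\ldots,(\sigma,j)) \xrightarrow{\varepsilon} (\ldots,q_j',\ldots,(\sigma,j+1))
\quad\text{whenever } q_j\xrightarrow{\sigma}q_j' \text{ in } \A_j ,
\]
with $(\sigma,k)$ identified with the base label $\varepsilon$. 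The zeroth volley is $(\ldots,\varepsilon)\xrightarrow{\sigma}(\ldots,(\sigma,1))$ via $\A_0$.

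\textbf{Correctness.} I will prove, by induction on the length of a word $w$, the following invariant: a base state $(q_0,\dots,q_{k-1},\varepsilon)$ is reachable from $s$ by a run reading $w$ iff for every $i\in[k]$ there is a run of $\A_i$ on $w$ from $s_i$ to $q_i$. The structural fact that makes this work is that every transition advances the label cyclically along a petal: base $\to(\sigma,1)\to\dots\to(\sigma,k-1)\to$ base. Hence any run from base to base decomposes into complete petal rounds. Each round reads exactly one letter $\sigma$, in its zeroth volley, and moves each component $i$ exactly once by a $\sigma$-transition of $\A_i$, while all other components stay fixed.

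For the forward direction, take the last round of the run and apply induction to the prefix before it. For the converse, from runs of all $\A_i$ on $w=w'\sigma$, build the last round by firing the volleys in order $0,\dots,k-1$ with the corresponding last transitions. Since final states lie in the base copy and are exactly $F_0\times\dots\times F_{k-1}\times\{\varepsilon\}$, the invariant gives $L(\nodprod)=\shortint$. The empty word is handled by the base case, since the empty run stays at $s$.

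\textbf{Size.} There are $1+(k-1)\ell$ copies of $\prodqi$, giving $(k\ell-\ell+1)n^k$ states. For transitions, fix a volley index $i$. A transition in the $i$th volley of the $\sigma$-petal is determined by one $\sigma$-transition of $\A_i$ together with a tuple of the other $k-1$ components, which range over at most $n^{k-1}$ values. Summing over $\sigma$, the $i$th volleys of all petals contain at most $\sum_\sigma |\Delta_i(\sigma)|\cdot n^{k-1}\le m n^{k-1}$ transitions. This summation over $\sigma$ is what removes the $\ell$ factor. Summing over the $k$ volley indices gives $k\,mn^{k-1}$.

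\textbf{Construction time.} I will build the accessible part by BFS/DFS from $s$, generating successors on the fly. This requires adjacency lists of each $\A_i$ grouped by letter, which are precomputed in $O(km)$ time by bucketing. For a state with label $\varepsilon$ we enumerate the outgoing transitions of $q_0$ in $\A_0$; for a label $(\sigma,j)$ we enumerate the $\sigma$-successors of $q_j$ in $\A_j$. Each enumeration costs time proportional to the transitions produced. Visited states are stored in an array indexed by the tuple: this is allocated lazily, or its initialisation cost $O(k\ell n^k)$ is absorbed, since accessible states number at most $1+$ transitions and the array can be avoided via hashing or the standard uninitialised-array trick.

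The main obstacle is this last bookkeeping point. The proof must avoid charging $\ell n^k$ for states that are never reached, so that the bound stays at $O(k\,mn^{k-1})$ together with the $O(km)$ preprocessing, which is absorbed since $n^{k-1}\ge1$.
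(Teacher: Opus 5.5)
Your proof is correct and follows the same route as the paper: runs from base to base decompose into petal rounds that interleave one run of each $\A_i$, the transitions of the $i$th volley are counted as $|\Delta_i(\sigma)|\cdot n^{k-1}$ and summed, and the accessible part is built by a search whose cost is dominated by the number of transitions. Your write-up is more careful than the paper's sketch in three places: the explicit induction invariant, the corrected volley indexing, and summing over $\sigma$ for fixed $i$ to get $|\Delta_i|\le m$. That last step is the right way to reach $k\,mn^{k-1}$, because the paper's intermediate inequality $\sum_\sigma \max_i|\Delta_i(\sigma)|\le m$ does not hold in general.
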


\begin{proof}[Proof (sketch)]
    Every accepting run of $\nodprod$ starts at the initial state in the base copy, traverses a whole number of petals, and ends at a final state in the base copy. 
    A run of $\nodprod$ on a word $w = \sigma_0\dots \sigma_{t-1}$ traverses the $\sigma_0$-petal, returns to the base, then traverses the $\sigma_1$-petal and so on. 
    Any such run is the interleaving of $k$ runs, one in each component automaton; see \Cref{fig:noddingrun}. 
    (The first is a run of $\A_0$ on $w$, and the others are runs of $\A_i$ on $w$, $i \in [1, k-1]$,  with edge labels being $\varepsilon$ throughout rather than $\sigma_0, \dots, \sigma_{n-1}$.)
    In summary, $w$ is accepted if and only if it is accepted by all the component automata, so $\nodprod$  recognises $\shortint$. 

    The bound on the number of states follows from the inequalities $|Q_i| \le n$.
    Denote by $\Delta_i(\sigma)$ the set of all $\sigma$-transitions in NFA~$\A_i$, for $\sigma \in \Sigma$.
    Then
    \begin{equation*}
    \qquad
    |\Delta| =
    \sum_{i \in [k]} \sum_{\sigma \in \Sigma}\left( \prod_{j \neq i} |Q_j|\right)|\Delta_i(\sigma)|
        \le k \cdot n^{k-1} \sum_{\sigma \in \Sigma}\hspace{0.2em} \max_{i \in [k]}\hspace{0.2em}{|\Delta_i(\sigma)|} \le k \cdot m n^{k-1}.
    \end{equation*}
    The number of accessible states in an NFA cannot exceed the number of transitions plus one.
    Thus, the accessible part of $\nodprod$ can be constructed in $\BigO(k \cdot m n^{k-1})$ time.
\end{proof}
\begin{figure*}
        \centering
        \includegraphics[width=0.42\textwidth]{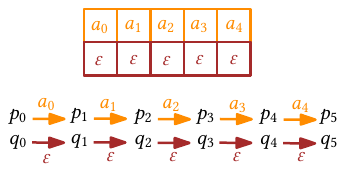} 
\caption{A run of $\nodprod$}    \label{fig:noddingrun}    
    \end{figure*}
\Cref{fig:noddingrun} shows a representation of a run of $\nodprod$ as a composition of runs of $\A_0$ and $\A_1$, using little squares or \emph{bricks}. Each brick represents a transition in $\nodprod$. 
The top orange row of bricks reflects how the run in the $Q_1$ component advances (like a progress bar), and the bottom brown row of bricks reflects the $Q_2$ component. The label $\sigma \in \Sigma \cup \{\varepsilon\}$ inside a brick signifies what is being read by the transition. 

An NFA very similar to \nodprod, but without $\varepsilon$-transitions, can be constructed
for the language of all $k$-stutterings of words in \shortint.
The \emph{$k$-stuttering} of $w = \sigma_0 \ldots \sigma_{t-1} \in \Sigma^t$ simply
repeats each letter $k$ times, in the same order: $\sigma_0^k \ldots \sigma_{t-1}^k$.
This NFA, which we call the \emph{echoing product} automaton, \echoprod,
is the same as \nodprod but
with $\varepsilon$-transitions re-labelled with elements of $\Sigma$ appropriately:
its construction recalls the same \Cref{obs:fact}
but does not apply the final $0 \bmod k$ restriction.
Intuitively, in rounds, the first automaton $\A_0$ announces the next letter, and the other automata echo this letter.

\subsection{Catch-up product automaton}
\begin{figure*}[t!]
    \centering

    \begin{subfigure}[t]{0.8\textwidth}
        \centering
        \includegraphics[width=0.99\textwidth]{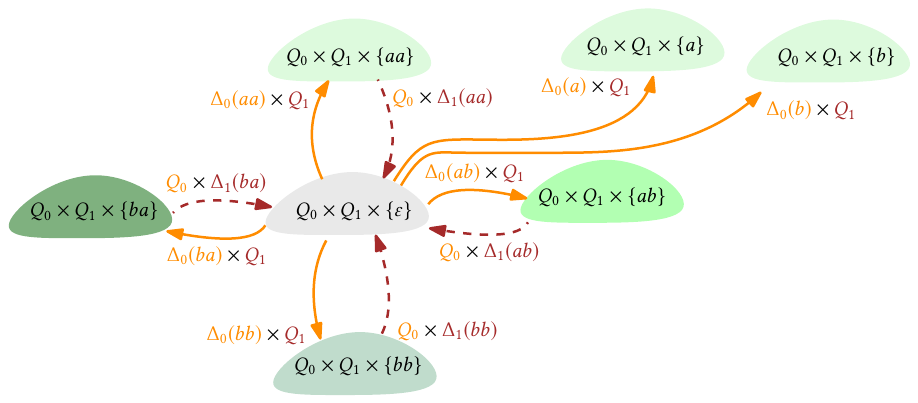} 
\caption{NFA $\catchprod$}    \label{fig:catch-up}    
    \end{subfigure}
~ \hspace{-18ex}
        \begin{subfigure}[t]{0.25\textwidth}
        \centering
        \includegraphics[width=0.98\textwidth]{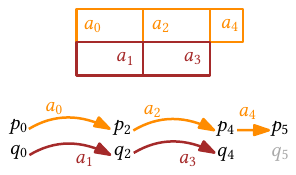} 
\caption{A run of $\catchprod$}    \label{fig:catch-uprun}    
    \end{subfigure}
    \caption{Catchup product automaton}
\end{figure*}
\paragraph{Idea of the construction.}
The catch-up product automaton is an NFA (not an $\varepsilon$-NFA) that recognises $\shortint$ and is sparser than the direct product automaton.
It is bigger than $\nodprod$ but does not have $\varepsilon$-transitions. 

Like $\nodprod$, the catch-up product automaton updates component automata one at a time.
Take $k = 2$.
Simulating two runs on input $w$ amounts to $2 |w|$ moves.
Our new NFA $\catchprod$ avoids $\varepsilon$-transitions and thus
performs two moves (either of $\A_0$ or of $\A_1$) per transition.
This is shown in the brick diagram in \Cref{fig:catch-uprun}.
The length of a brick signifies the number of moves by which the component run advances during that transition.  
In the figure, $\catchprod$ reads $a_0$ and updates the $Q_0$ component by two moves (thus committing to $a_1$ as well);
then reads $a_1$ and advances the $Q_1$ component by two moves;
then reads $a_2$ and advances the $Q_0$ component by two moves (committing to $a_3$), etc.

Intuitively, in rounds, the first automaton moves forward $k$ steps and then waits as the rest catch up to it, one by one.  

\paragraph{Definition of $\catchprod$.}
First consider $k = 2$.
We define the catch-up product automaton $$\catchprod = (Q, \Sigma, \Delta,s, F).$$ 
Here $Q = Q_0 \times Q_1\times (\{\varepsilon\} \cup \Sigma \cup \Sigma^2)$ and $s = (s_0, s_1, \varepsilon)$. 
The set $\Delta$ includes transitions:
\begin{itemize}
\item
$(p_0, q, \epsilon) \xrightarrow[]{\sigma_0} (p_2, q,  \sigma_0\sigma_1)$ whenever $p_0 \xrightarrow[]{\sigma_0} p_1 \xrightarrow[]{\sigma_1}p_2$ is a path in $\A_0$,
\item
$ (p, q_0, \sigma_0\sigma_1) \xrightarrow[]{\sigma_1} (p, q_2, \epsilon) $ whenever $q_0 \xrightarrow[]{\sigma_0} q_1 \xrightarrow[]{\sigma_1}q_2$ is a path in $\A_1$
(we note that pre-computation of the $2$-step reachability relations in $\A_0$ and $\A_1$ is necessary),
\item
$(p_0,q, \epsilon) \xrightarrow[]{\sigma} (p_1,q,\sigma)$ whenever $p_0 \xrightarrow[]{\sigma} p_1$ is a transition in $\A_0$ (for each $q \in Q_1$ and $\sigma \in \Sigma$). 
\end{itemize}
There are no outgoing transitions from the $\Sigma$-labelled copies. 
The set of final states $F$ is $(F_0 \times F_1 \times \{\varepsilon\}) \cup \{(f_0, q_1, \sigma) \mid (q_1, \sigma, f_1) \in \Delta_1 \text{ and } (f_0, f_1) \in F_0 \times F_1 \}$.

Given $k$ NFA, $\catchprod$ is a flower with a central base copy and $\ell^k$ petals, one for each $k$-letter word over $\Sigma$  (see \Cref{fig:catch-up} for the case $\Sigma = \{a, b\}$).
Each petal is a $k$-cycle of copies of the Cartesian product \prodqi, connected by $k$ volleys. 
Given $ u = \sigma_0\dots \sigma_{k-1} \in \Sigma^k$, in the $u$-petal, starting from the base copy, the first volley reads $\sigma_0$ and updates the $Q_0$ component as if $\A_0$ read $u$. The next volley reads $\sigma_1$ and updates the $Q_1$ component as if $\A_1$ read $u$, and so on till the last volley, which reads $\sigma_{k-1}$, updates the $Q_{k-1}$ component, and returns to the base copy. 

There are also trailing paths, or \emph{tails}, from each state in the base copy, for each word $v \in \Sigma^{<k}$, to accept words whose length is not divisible by $k$.  For a given $v = \sigma_0\dots \sigma_{t-1}$, the $v$-tail consists of $t$ copies of \prodqi, connected by volleys. The first volley in the $v$-tail reads $\sigma_0$ and updates the $Q_0$ component as if $\A_0$ read $v$; the second volley reads $\sigma_1$ and updates the $Q_1$ component as if $\A_1$ read $v$; etc. At the end of the tail, in the $t$th copy of \prodqi, the first $t$ components are up to date, but the rest are not. In this tail-end copy, a state $(q_1, \dots, q_k, v, t)$ is accepting if the first $t$ components of the state are final states, and for every $i$ in $[t,  k-1]$, $q_i \xrightarrow[]{v} f_i$ in $\A_i$ where $f_i \in F_i$. 

The initial state of $\catchprod$ is the initial vector in the base copy, and the final states are the product of the final states in the base copy, as well the ends of the tails described previously. 

Given a word $u$ of length at most $k$, we define the \emph{$u$-reachability relation} of $\A_i$, denoted by $\Delta_i^{u}$, to be the set of pairs of states $(p, q) \in Q_i^2$ such that there is a path in $\A_i$ from $p$ to $q$ reading $u$. 
Denote $m_{\leq k} = \max\limits_{i < k} \max\limits_{u \in \Sigma^{\leq k}} |\Delta_i^{u}|$. Note that $m_{\leq k} \le n^2$, because each $\Delta_i^{u}$ fixes the word $u$, irrespective of $\ell$.

\begin{theorem}\label{thm:catch-up}
    The NFA $\catchprod$: \begin{itemize}
\item recognises $\shortint$,
        \item has at most $2k\ell^{k} \cdot n^k $ states and at most $2k\ell^{k} \cdot m_{\scriptscriptstyle{\leq k}} \hspace{0.25em}n^{k-1} $ transitions, and
        \item has accessible part which can be constructed in $\BigO(k \ell^{k} \cdot ( m_{\scriptscriptstyle{\leq k}} \hspace{0.25em}n^{k-1} + k \cdot n^{\omega}))$ time.
    \end{itemize}  
\end{theorem}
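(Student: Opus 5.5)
The plan is to mirror the proof of \Cref{thm:nodding}, treating the three bullets separately.

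\textbf{Correctness.} Fix $w = \sigma_0 \dots \sigma_{t-1}$ and write $t = kr + t'$ with $0 \le t' < k$. I will prove by induction on $r$ the following invariant. After reading $kr$ letters, $\catchprod$ can be in the base-copy state $(q_0,\dots,q_{k-1})$ if and only if, for every $i$, there is a run of $\A_i$ from $s_i$ to $q_i$ on the prefix $\sigma_0\dots\sigma_{kr-1}$.

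For the inductive step, traversing the $u$-petal with $u = \sigma_{kr}\dots\sigma_{kr+k-1}$ reads exactly the letters of $u$. The $j$th volley applies $\Delta_j^{u}$ to the $j$th component and leaves the other components fixed. So the base-copy states reached after one petal are exactly the componentwise images under $\Delta_j^{u}$.

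A subtlety is that the first volley reads only $\sigma_0$ but commits to all of $u$. Since the petal is indexed by $u$ and the subsequent volleys read $\sigma_1,\dots,\sigma_{k-1}$, any completed traversal reads exactly $u$. I also need that every run of $\catchprod$ ends in the base copy or partway along a petal or tail. Runs that stop strictly inside a petal end in non-final states, because all final states lie in the base copy or at tail ends.

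For the remaining suffix $v$ of length $t' \ge 1$, the $v$-tail updates components $0,\dots,t'-1$ by $\Delta_j^{v}$. The acceptance condition at the tail end then checks two things: the updated components are final, and each stale component $q_i$ with $i \ge t'$ can reach $F_i$ by reading $v$. Together with the invariant, this is exactly the condition $w \in \Lang(\A_i)$ for all $i$. The case $t'=0$ is handled by $F_0\times\dots\times F_{k-1}$ in the base copy.

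\textbf{Size.} Each petal has $k-1$ non-base copies of $\prodqi$. There are $\ell^k$ petals, plus tails with total $\sum_{t<k} t\,\ell^t \le k\ell^k$ copies (up to a constant). Hence the number of states is at most $2k\ell^k n^k$.

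Each volley updates a single component $j$ using the relation $\Delta_j^{u}$, with the other $k-1$ components free. A volley therefore has at most $|\Delta_j^{u}|\, n^{k-1} \le m_{\le k} n^{k-1}$ transitions. There are $k\ell^k$ volleys in petals and at most about $k\ell^k$ in tails, which gives $2k\ell^k m_{\le k} n^{k-1}$.

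\textbf{Construction time.} First precompute all relations $\Delta_i^{u}$ for $u \in \Sigma^{\le k}$ by Boolean matrix multiplication. Do this incrementally: $\Delta_i^{u\sigma} = \Delta_i^{u}\cdot\Delta_i(\sigma)$, so each word of length at most $k$ costs one product over its prefix. That is $O(k\cdot \ell^{\le k})$ products per automaton, and $O(k\ell^k \cdot k n^\omega)$ in total. Next precompute, for the acceptance test at tail ends, the sets $\{q : q \xrightarrow{v} f\in F_i\}$ from the same matrices. Finally run a graph search from the initial state, generating the outgoing volley transitions on the fly from adjacency lists of the $\Delta_j^{u}$. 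Each accessible state and transition is handled in $O(1)$, which gives the $O(k\ell^k m_{\le k} n^{k-1})$ term.

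The main obstacle I expect is the correctness bookkeeping at tail ends and inside partially traversed petals. I will handle it by stating the invariant precisely for all three kinds of copies: base, intermediate petal, and tail end. The accounting of $\ell^{<k}$ versus $\ell^k$ in the constants is the only other delicate point; I will absorb it with $\sum_{t<k}\ell^t \le \ell^k$ for $\ell\ge 2$, treating $\ell=1$ directly.
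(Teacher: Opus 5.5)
Your proposal is correct and follows essentially the paper's route. Correctness comes from decomposing runs of $\catchprod$ into component runs; your invariant on reachable base-copy states is a more explicit version of the paper's $\pi{\upharpoonright}_{i \bmod k}$ argument. The size bound comes from counting copies and volleys per word, with at most $|\Delta_i^{u}|\,n^{k-1}$ transitions per volley. The time bound comes from precomputing the $u$-reachability matrices and then exploring the accessible part.

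One caveat: for $\ell=1$ the statement's constant $2k$ fails, so that case cannot be ``treated directly''. The construction has $k(k+1)/2$ copies and $k(k+1)/2$ volleys, which exceeds $2k$ once $k\ge 4$. For example, complete unary NFA with $k=4$ give $10n^5 > 8n^5$ transitions. The stated constants, and the paper's own count (which implicitly uses $|\Sigma^{\le k}|\le 2\ell^k$), hold only for $\ell\ge 2$. This is harmless for the asymptotic claims.
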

\begin{proof}[Proof (Sketch)]
     Suppose $\catchprod$ reads a word $w$, where $k$ divides $|w|$. 
    The run, call it $\pi$, starts at the initial state in the base copy, traverses a whole number of petals, and returns to the base copy. 
 
    Consider the restriction $\pi {\upharpoonright}_{0 \bmod k}$:
    in it, only the first volley in each petal is taken.
    Let $t = |w|/k$.
    In the sequence of transitions $\pi {\upharpoonright}_{0 \bmod k}$,
    the $Q_0$ component of $\catchprod$ jumps $k$ moves at a time:
    $q_0 \xrightarrow[]{w_0} q_k \xrightarrow[]{w_1} q_{2k} \xrightarrow[]{w_2} \dots \xrightarrow[]{w_{t-1}} q_{|w|}$,
    where $w = w_0 \ldots w_{t-1}$ and $|w_j|=k$ for all~$j$.
    This yields a run of $\A_0$ on $w$.
    Considering other restrictions similarly yields runs of the other $\A_i$'s on $w$.
    Hence, a run reaches a final state in the base copy if and only if all $k$ of the corresponding component runs were also accepting. 
    
    When $|w| \neq 0 \mod k$, the run proceeds in the petals as long as possible, reaching a state in the base copy, e.g., $(\qvec)$. At this point there are only $|w| \bmod k$ letters left to be read, say $v \in \Sigma^+$. The $(\qvec, v)$ tail is the only one that can read $v$. Such a run accepts $w$ if and only if for each component $\A_i$ there is a run on $w$ that starts at $s_i$, reaches $q_i$ with only $v$ left to be read, and $q_i$ could read $v$ to reach a final state, i.e., if $\A_i$ accepts $w$.    
    \begin{figure*}[t!]
    \centering
     \begin{subfigure}[t]{0.65\textwidth}
         \centering
         \includegraphics[width=0.99\textwidth]{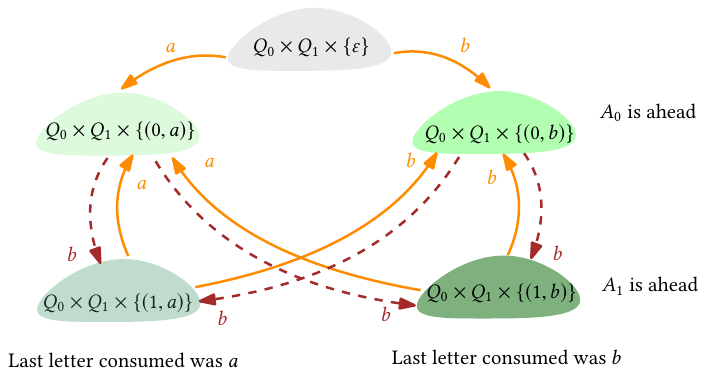} 
     \end{subfigure}

        \caption{Leapfrog product automaton $\frogprod$}
    \label{fig:leapfrog}    

\end{figure*}
    Hence, $\catchprod$ recognises \shortint.  
    
     In $\catchprod$ there are at most $k$ copies of \prodqi for every word of length at most $k$, so overall there are at most $2k\ell^{k} \cdot n^k$ states. Moreover, for each $i \in [k]$, for every word $u$ of length at most $k$, for every element of the $u$-reachability relation $\Delta_i^{u}$, and for every $(k-1)$-tuple of states from components other than $Q_i$, we have at most one transition of $\catchprod$. 
    Hence, $|\Delta| \leq 2k\ell^{k} \cdot m_{\scriptscriptstyle{\leq k}} \hspace{0.25em}n^{k-1} $. 

    The accessible states and transitions can be constructed as needed from the initial state, and their number is dominated by the number of transitions. 
    
    In order to construct the transitions of $\catchprod$, we need to precompute all of the $u$-reachability relations, i.e., determine if $q \xrightarrow[]{\sigma_0\ldots \sigma_{t-1}} q'$ for each automaton $\A_i$ and each word $u = \sigma_0 \ldots \sigma_{t-1} 
    \in \Sigma^{\leq k}$, $t \leq k$. For every such word, this relation is obtained multiplying $t$ adjacency matrices appropriately.

    Overall, this precomputation takes time $\BigO(k^2 \ell^k \cdot n^{\omega})$.  
    The upper bound on the total running time follows.
\end{proof}

\subsection{Leapfrog product automaton}

To construct $\catchprod$, we have pre-computed the $k$-step reachability relations. Given these relations, we can reduce the number of states in the product automaton even further. 
For example, when processing the word $a_0a_1a_2a_3a_4$ in two automata, the product automaton could read $a_0$ and update  the $Q_0$ component as if $\A_0$ read $a_0$, then read $a_1$ and update the $Q_1$ component as if $\A_1$ read $a_0a_1$, then read $a_2$ and update the $Q_0$ component as if $\A_0$ read $a_1a_2$, and so on (see \Cref{fig:leapfrogrun}). 
By staggering the bricks as shown, to perform such updates we only need to store $\Sigma$ in memory. 

Intuitively, each NFA, in turn, leaps over the others to become the furthest ahead.

As an example, for two NFA, for each  $\sigma_0\sigma_1 \in \Sigma^2$, the set of transitions of the leapfrog product automaton $\Delta$ contains:
\begin{itemize}
\item $(p_0, q_0, 0, \sigma_0) \xrightarrow[]{\sigma_1} (p_2, q_0, 1, \sigma_1)$ whenever $\A_0$ has the path $p_0 \xrightarrow[]{\sigma_0} p_1 \xrightarrow[]{\sigma_1} p_2$ and for all $q_0 \in Q_0$,
\item $(p_0, q_0, 1, \sigma_0) \xrightarrow[]{\sigma_1} (p_0, q_2, 0, \sigma_1)$ whenever $\A_1$ has the path $q_0 \xrightarrow[]{\sigma_0} q_1 \xrightarrow[]{\sigma_1} q_2$ and for all $p_0 \in Q_1$.
\end{itemize}
Here the entries `$0$' and `$1$' indicate which component automaton
is behind in the simulation.
A sketch of $\frogprod$ is shown in \Cref{fig:leapfrog}.

This is the main idea behind the leapfrog product, which saves a factor of $\ell$ in the number of states compared to the catch-up product.
Given $k$ NFA, the leapfrog product automaton processes $k|w|$ updates while only storing $\Sigma^{k-1}$ in memory; see Figure~\ref{fig:leapfrog}.

 \begin{figure*}[t!]
    \centering

   \begin{subfigure}[t]{0.61\textwidth}
        \centering
        \includegraphics[width=0.99\textwidth]{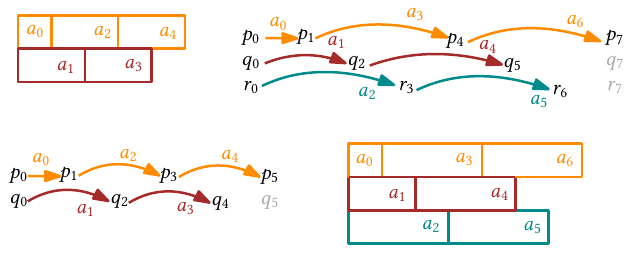} 
    \end{subfigure}
\caption{A run of $\frogprod$}    \label{fig:leapfrogrun}    

\end{figure*}

Let $m_{\scriptscriptstyle{\leq k}}$ be defined as in the previous subsection.

\begin{theorem}\label{thm:leapfrog}
    The NFA $\frogprod$: 
    \begin{itemize}
        \item recognises $\shortint$,
        \item has at most $2k\ell^{k-1} \cdot n^k $ states and at most $2k\ell^{k} \cdot m_{\scriptscriptstyle{\leq k}} \hspace{0.25em}n^{k-1}$  transitions, and 
        \item can be constructed in $\BigO(k\ell^{k} \cdot m_{\scriptscriptstyle{\leq k}} \hspace{0.25em}n^{k-1})$  time.
    \end{itemize} 
\end{theorem}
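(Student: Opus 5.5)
I will first fix a precise definition of $\frogprod$ for general $k$, generalising the two-automaton example. States are tuples $(q_0,\dots,q_{k-1}, j, v)$, where $j\in[k]$ indicates which component is furthest behind (the one to move next), and $v\in\Sigma^{\le k-1}$ is the buffer of the last letters read that component $j$ has not yet processed. In a steady state after reading a prefix $x$ of length $r\ge k-1$, the invariant will be as follows. Component $j$ has processed exactly $x$ minus its last $k-1$ letters, the next component $j+1 \bmod k$ is one letter further, and so on, and $v$ consists of the last $k-1$ letters of $x$. Reading a letter $\sigma$ appends it to $v$ and updates $q_j$ by a pair in $\Delta_j^{v\sigma}$. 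Component $j$ thereby leaps to be fully up to date, $j$ advances to $j+1 \bmod k$, and the buffer becomes the suffix of $v\sigma$ of length $k-1$. A start-up phase handles the first $k-1$ letters. Component $i$ initially takes a leap over a word of length $i+1$, so the stagger is built up, as in \Cref{fig:leapfrogrun}, where $\A_0$ first reads $a_0$ alone. Final states are tuples where each lagging component $q_i$ can still read its pending suffix of the buffer into $F_i$, which is the analogue of the tail-end acceptance in $\catchprod$. The buffer can be shorter than $k-1$ only during start-up, so it suffices to carry this together with $j$. This gives at most $k\cdot\sum_{t<k}\ell^t\cdot n^k\le 2k\ell^{k-1}n^k$ states.

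\textbf{Correctness.} I will prove the invariant above by induction on the length of the prefix read. The key observation is that a run of $\frogprod$ on $w$, restricted to the transitions moving component $i$, gives a concatenation of segments in $\A_i$ that reads a prefix of $w$, namely $w$ minus a suffix of length at most $k-1$. Conversely, given accepting runs of every $\A_i$ on $w$, I cut each run at the positions dictated by the staggered schedule and reassemble them into a run of $\frogprod$. This mirrors the argument in the proof of \Cref{thm:catch-up}, with the modulo-$k$ restriction replaced by the leap schedule. Acceptance then matches: the unprocessed suffix of each component is exactly the part of the buffer it has not seen, and the final-state condition requires that this suffix leads to $F_i$. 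Since the buffer is recorded in the state, this condition is checkable.

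\textbf{Size and time.} Each transition is determined by a state's $j$ and $v$, a letter $\sigma$, a pair in $\Delta_j^{v\sigma}$, and the other $k-1$ components. Summing over $j\in[k]$ and over $v\sigma\in\Sigma^{\le k}$ (at most $2\ell^k$ words) gives at most $2k\ell^k m_{\le k}n^{k-1}$ transitions.

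For the running time, the point is that the bound contains no $n^\omega$ term, unlike \Cref{thm:catch-up}. I expect this step to be the main obstacle. My plan is to compute the relations $\Delta_i^u$ incrementally over the trie of words $u\in\Sigma^{\le k}$, obtaining $\Delta_i^{u\sigma}$ from $\Delta_i^u$ by composing with $\Delta_i(\sigma)$. The cost of this composition is $\sum_{(p,q)\in\Delta_i^u}\mathrm{outdeg}_\sigma(q)$. Bounding this cost by $O(m_{\le k}\cdot n)$, or amortising it against the output, needs care. If it is not charged properly, I would bound the precomputation by $O(k\ell^k m_{\le k} n)$, which is dominated by the main term since $k\ge2$. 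With the relations available, the transitions are listed directly, with $O(1)$ work per transition.
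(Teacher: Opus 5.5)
Your proposal is correct, and its construction and correctness argument are the paper's. You use the same stagger: component $i$ first leaps over a word of length $i+1$ and then over $k$ letters at a time. The state carries a buffer of the last $k-1$ letters and the index of the component furthest behind, acceptance goes through each component's pending suffix, and the converse cuts the accepting runs of the $\A_i$ along this schedule (the paper's ``paths with $k$-skips'').

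\textbf{Where you differ: the running time.} The paper precomputes every $\Delta_i^{u}$ by multiplying adjacency matrices. Its appendix proof in fact concludes with $\BigO(k\ell^k m_{\le k}n^{k-1}+k^2\ell^k n^{\omega})$. The statement omits this additive term, and the main term does not obviously absorb it (e.g.\ for $k=2$ and DFA inputs, where $m_{\le 2}\le n$).

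Your incremental composition along the trie of $\Sigma^{\le k}$ is therefore not a fallback; it is the argument that gives the bound exactly as stated. The step you flag as the main obstacle is routine:
\begin{itemize}
\item Every out-degree is at most $n$, so $\sum_{(p,q)\in\Delta_i^{u}}\mathrm{outdeg}_\sigma(q)\le|\Delta_i^{u}|\cdot n\le m_{\le k}n$.
\item Duplicates are removed within this cost, plus $\BigO(n)$ per product, by one marker array over $Q_i$ that is cleared after each source state $p$.
\item Summing over $i$ and $u\sigma$ gives $\BigO(k\ell^k m_{\le k}n)$, which lies within $\BigO(k\ell^k m_{\le k}n^{k-1})$ since $k\ge2$.
\end{itemize}
As a side benefit, your construction avoids fast matrix multiplication. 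The paper's matrix-product route is shorter to state and reuses the catch-up analysis, but it gains nothing for this bound.

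\textbf{A minor repair to your counting.} The inequalities $\sum_{t<k}\ell^t\le2\ell^{k-1}$ and $|\Sigma^{\le k}|\le2\ell^k$ need $\ell\ge2$. For a unary alphabet, use that $j=|v|$ is forced while $|v|<k-1$, so one copy per short buffer suffices (as in the paper's state space). The counts then become $(k\ell^{k-1}+\sum_{t\le k-2}\ell^t)\,n^k$ states and $(k\ell^k+\sum_{1\le t\le k-1}\ell^t)\,m_{\le k}n^{k-1}$ transitions. Both are within the claimed bounds for every $\ell\ge1$.
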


A formal definition of the NFA \frogprod, as well as a proof of \Cref{thm:leapfrog},
is given in \Cref{app:leapfrog}.

%% file: deciding.tex
\section{Deciding emptiness of \mbox{NFA intersection}: \\Algorithm and complexity}

\subsection{Upper bound}\label{sec:UB}

Using the nodding product automaton, we give a new algorithm for solving NFA $k$-intersection emptiness that is faster than constructing the direct product when the automata are dense.

\begin{theorem}\label{thm:UB}
    There is an algorithm for NFA $k$-IE that runs in {$\BigO(k \cdot m n^{k-1})$} time. 
\end{theorem}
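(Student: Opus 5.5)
The algorithm builds the accessible part of the nodding product automaton $\nodprod$ from \Cref{thm:nodding} on the fly and checks whether some final state is reachable. By the first item of \Cref{thm:nodding}, $\nodprod$ recognises $\shortint$. Since an $\varepsilon$-NFA has nonempty language iff some final state is reachable from the initial state in its transition graph (labels are irrelevant to this question), $\shortint = \emptyset$ iff no state of $F_0\times\dots\times F_{k-1}\times\{\varepsilon\}$ in the base copy is reachable from $(s_0,\dots,s_{k-1},\varepsilon)$. The algorithm is therefore a breadth-first (or depth-first) search from the initial state, answering ``nonempty'' as soon as a final state is discovered.

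\textbf{Step 1: preprocessing.} For each $\A_i$ and each letter $\sigma$, I would build per-state, per-letter adjacency lists $\Delta_i(\sigma)[q]$. This is a bucketing pass over the transition lists and takes $O(k(m+n\ell))$ time. To avoid the $n\ell$ term I would index the buckets only by the letters that actually occur, for example with a hash table or a single radix sort of the triples, giving $O(km)$ total. This is dominated by $k\,mn^{k-1}$ since $n\ge 1$.

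\textbf{Step 2: search.} States of $\nodprod$ are encoded as a $k$-tuple of state indices, a phase index in $[k]$, and a letter from $\Sigma$ (or $\varepsilon$). For the visited set I would use a hash table or a lazily initialised array; either way, each discovered state costs $O(1)$ amortised to store. Given a popped state, its successors are generated directly from the adjacency lists of the single component that moves in the corresponding volley: from the base copy, for every $\sigma$ and every $\A_0$-transition out of the current $Q_0$ coordinate; from a $\sigma$-copy, only the $\sigma$-transitions of the appropriate next component. Each transition of $\nodprod$ is thus enumerated exactly once when its source is expanded, at $O(1)$ cost per transition, apart from copying a $k$-tuple.

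\textbf{Step 3: accounting.} The main obstacle is this bookkeeping. By the transition count in the proof of \Cref{thm:nodding}, at most $k\,mn^{k-1}$ transitions are ever enumerated, and at most that many plus one states are discovered. Copying a $k$-tuple naively costs $O(k)$ per step, which would inflate the bound to $O(k^2 mn^{k-1})$. To stay within $O(k\cdot mn^{k-1})$ I would encode tuples as a single integer in mixed radix $n$, so that updating one coordinate costs $O(1)$ arithmetic on words of $O(k\log n)$ bits; this is standard in the word-RAM model for fixed $k$. Alternatively, one can note that the paper treats $k$ as fixed in NFA $k$-IE, so factors of $k$ are constants anyway. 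Summing the preprocessing and the search gives the claimed $O(k\cdot mn^{k-1})$ running time.
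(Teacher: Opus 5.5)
Your proposal is correct and follows the paper's proof: build the accessible part of $\nodprod$ on the fly, run breadth-first search from the initial state looking for a reachable final state in the base copy, and bound the total work by the $k\cdot mn^{k-1}$ transition count from \Cref{thm:nodding}. Your extra bookkeeping (per-letter adjacency lists, and a mixed-radix encoding of product states so that each volley step costs $O(1)$ rather than $O(k)$) supplies implementation details that the paper leaves implicit when it says the accessible part can be constructed in $O(k\cdot mn^{k-1})$ time.
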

\begin{proof}
	We construct the accessible part of the nodding product automaton $\nodprod$. 
	By \Cref{thm:nodding}, NFA $k$-IE reduces to deciding if $\Lang(\nodprod) = \emptyset$. 

    For this, our algorithm checks if its final states are reachable from its initial state, using breadth-first search. 
    The running time of this algorithm is bounded by the running time of constructing the accessible part of $\nodprod$. By \Cref{thm:nodding}, this is $\BigO(k \cdot n^{k-1}m)$ time. 
\end{proof}
In particular, consider two NFA over a fixed alphabet. Deciding reachability in $\nodprod$ takes $\BigO(mn) = \BigO(n^3)$ time in the worst case, so this is a cubic time algorithm for 2-IE. 

	\Cref{thm:UB} gives an algorithm that decides, given $k$ NFA each having $n$ states, whether they accept a word in common in 
   \begin{itemize}
       \item $\BigO(kn^{k+2})$ time, when  $|\Sigma| = \BigO(n)$, and 
        \item $\BigO(kn^{k+1})$ time, when  $|\Sigma| = \BigO(1)$. 

   \end{itemize}

\subsection{Lower bound}\label{sec:LB}
\begin{figure*}[t] 
    \hspace{1em}\begin{subfigure}[t]{0.24\textwidth}
        \centering
        \includegraphics[width=0.99\textwidth]{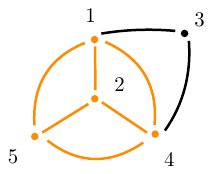}
    \caption{A graph $G$ that contains a $4$-clique}
    \label{fig:graph}
    \end{subfigure}
~ \hspace{-2ex}
    \begin{subfigure}[t]{0.18\textwidth}
    \centering
\includegraphics[width=0.99\textwidth]{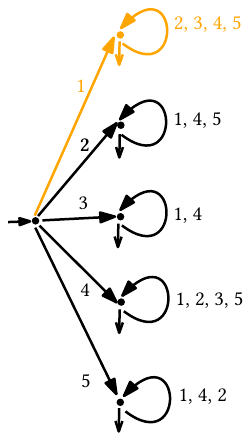}
    \caption{DFA $\A_0$}
    \label{fig:a0}
    \end{subfigure}
~ \hspace{-5ex}
\begin{subfigure}[t]{0.242\textwidth}
    \centering
\includegraphics[width=0.99\textwidth]{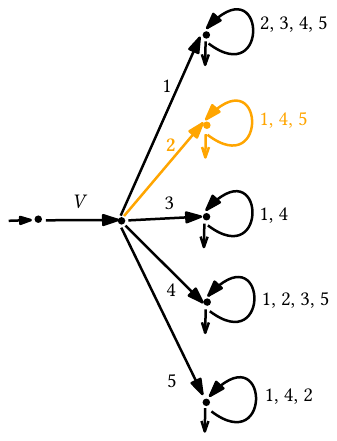}
    \caption{DFA $\A_1$}
    \label{fig:a1}
\end{subfigure}
~ \hspace{-5ex}
\begin{subfigure}[t]{0.32\textwidth}
    \centering
\includegraphics[width=0.96\textwidth]{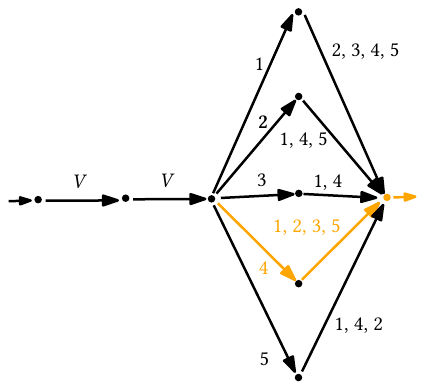}
    \caption{DFA $\A_2$}
    \label{fig:a2}
\end{subfigure}
\caption{Graph $G=(V,E)$ and DFA $\A_0, \A_1, \A_2$ such that $G$ contains a $4$-clique
iff these $3$ DFA accept a word in common. Every edge labelled $V$ represents five transitions.}
\end{figure*}
In this section, we give a lower bound on the complexity of NFA $k$-IE.  
We show that every algorithm for NFA $k$-IE either is not much faster than $\BigO( m n^{k-1})$,
or uses fast matrix multiplication.
At the source of our reduction is the following well-established hypothesis in fine-grained complexity:

 \begin{hypothesis}[Combinatorial $k$-Clique Hypothesis~\cite{AbboudBW15a}]\label{hyp:kclique}
    There is no combinatorial $\BigO(n^{k-\epsilon})$-time algorithm that can detect whether a given undirected graph on $n$ vertices contains a clique of size $k$, that is, a set of $k$ pairwise adjacent vertices. 
     \end{hypothesis}
     
     The notion of a `combinatorial' algorithm is not rigorous. It refers to avoiding fast matrix multiplication. Many such algorithms have only a small constant in the Big~O bound on the worst-case running time and use natural combinatorial structures based on the input~\cite{AbboudFischerSchechter}. 
Subject to \Cref{hyp:kclique},  multiple problems have been proved resistant to `combinatorial' speed-ups.  Without the `combinatorial' qualifier, \Cref{hyp:kclique} would be false, since $k$-cliques can be detected using an algorithm with running time bounded above by $\BigO(n^{\omega \lceil k/3 \rceil})$ which uses matrix multiplication. For example, the algorithm from \Cref{thm:UB} is combinatorial. 

\begin{lemma}\label{lem:kclique} Given a number $k > 2$ and a graph $G = (V, E)$ where $|V| = n$ and $|E| = m$, there exists an algorithm that constructs $(k-1)$ DFA such that:
\begin{enumerate}
\renewcommand{\theenumi}{(\roman{enumi})}
\renewcommand{\labelenumi}{\theenumi}
\item they all accept a word in common if and only if $G$ has a $k$-clique,
\item they each have $\BigO(n)$ states, $\BigO(m)$ transitions, and a $\BigO(n)$ sized alphabet, and
\item the running time of this algorithm is $\BigO(k \cdot n^2)$. 
\end{enumerate}
\end{lemma}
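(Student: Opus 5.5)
We encode a candidate $k$-clique as a word of length exactly $k$ over the alphabet $\Sigma = V$. A word $w = v_0 v_1 \cdots v_{k-1}$ is read as the $k$-tuple of vertices $(v_0, \dots, v_{k-1})$. The plan is to build, for each $i \in [k-1]$, a DFA $\A_i$ whose job is to check that $v_i$ is adjacent to every later vertex $v_j$ with $i < j \le k-1$. Every unordered pair $\{i,j\}$ with $0 \le i < j \le k-1$ is then checked by exactly one automaton, namely $\A_i$. So the intersection consists exactly of the words $v_0\cdots v_{k-1}$ whose entries are pairwise adjacent. Since $G$ is undirected and has no self-loops, adjacency also forces the $v_j$ to be pairwise distinct, so such a word exists iff $G$ has a $k$-clique. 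This gives item (i); it uses $k-1$ automata, as required.

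\textbf{Construction of $\A_i$.} The automaton is layered by position in the word.
\begin{itemize}
\item \emph{Free prefix.} It starts with a chain of states $c_0, \dots, c_i$, where $c_t \xrightarrow{v} c_{t+1}$ for every $v \in V$. These transitions read the first $i$ letters without any check.
\item \emph{Storing $v_i$.} From $c_i$, reading a letter $u$ leads to a state $(u, i+1)$, which remembers $u$ as the vertex $v_i$.
\item \emph{Checking later vertices.} For each position $j = i+1, \dots, k-1$ there is a layer of states $(u, j)$, $u \in V$, with a transition $(u,j) \xrightarrow{x} (u, j+1)$ iff $\{u,x\} \in E$. This verifies $v_j \in N(v_i)$ while still remembering $v_i$.
\item \emph{Acceptance.} The final states are the states $(u, k)$.
\end{itemize}
Each state has at most one successor per letter, so $\A_i$ is a DFA. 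Correctness of item (i) follows by tracing the unique run on $w$: it reaches a final state iff $|w| = k$ and $\{v_i, v_j\} \in E$ for all $j > i$.

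\textbf{Size bounds.} Each layer has at most $n$ states. The free prefix and the storing step contribute $n(i+1)$ transitions in total, and each checking layer contributes $2m$ transitions, one per ordered edge. Hence $\A_i$ has $O(kn)$ states and $O(k(n+m))$ transitions, which is $O(n)$ states and $O(m)$ transitions when $k$ is fixed, as item (ii) intends; we assume $m \ge n$, isolated vertices being irrelevant for cliques with $k > 2$. For item (iii), each layer is written down by scanning the adjacency matrix (or the adjacency lists) once. This costs $O(n^2)$ per layer, hence $O(k n^2)$ per automaton. Building all $k-1$ automata with shared precomputed adjacency lists stays within the stated bound when $k$ is treated as a constant; otherwise we account for it through the $O(n^2)$ per-layer cost.

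\textbf{Main obstacle.} The delicate point is the dependence on $k$ in item (ii): the layered construction honestly has $\Theta(kn)$ states and $\Theta(km)$ transitions. I would first try to argue that $k$ is a fixed constant, as in the NFA $k$-IE setting, so that these are $O(n)$ and $O(m)$. If a $k$-independent bound is needed, I would try collapsing the checking layers of $\A_i$ into a single copy of $V$ with self-loops on neighbours, and let $\A_0$ alone enforce $|w| = k$. The remaining concern is then that $\A_0$ still needs the position count; I expect to resolve it by charging $\A_0$'s layers to the constant $k$.
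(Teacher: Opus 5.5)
Your proof is correct when $k$ is treated as a constant, and it follows the paper's plan. It uses the same encoding over $\Sigma = V$, gives $\A_i$ the same check ``$v_i$ is adjacent to every later $v_j$'', and uses the same free prefix chain followed by a state that stores $v_i$.

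The one difference is exactly the point you flag as the main obstacle. The paper does not count positions in $\A_0,\dots,\A_{k-3}$ at all: after storing $v$ they sit in a single state $q_v$ with self-loops on the neighbours of $v$. Only the last automaton $\A_{k-2}$ enforces $|w|=k$, and it can do so cheaply. Its prefix chain already counts the first $k-2$ letters, and its stored vertex $v_{k-2}$ is compared with just one more letter, after which it enters a unique final state $q_f$ with no outgoing transitions. This gives $O(n+k)$ states and $O(kn+m)$ transitions per automaton, and $O(k^2n+km)=O(kn^2)$ total time (for $k\le n$; otherwise there is no $k$-clique).

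Your fallback of letting $\A_0$ enforce the length would not remove the factor $k$, since $\A_0$ has to carry $v_0$ across $k-1$ positions. Your layered construction has $\Theta(k^2 m)$ transitions in total, so it matches the explicit factor $k$ in (iii) only for constant $k$. The lemma is only applied with $k$ fixed (\Cref{thm:LB}), and the paper's own transition count $kn+n+2m$ is likewise $O(m)$ in general only for fixed $k$. So this is a matter of economy, not a gap in your argument.
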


\begin{example}\label{ex1}
    Given the graph $G$ in~\Cref{fig:graph}, we  construct $3$ DFA whose intersection contains a word $w$ if and only if $w$ encodes a $4$-clique such as $\{1, 2, 4, 5\}$. 
Let the vertices be $V = \{1, 2, 3, 4, 5\}$. Our encoding maps sequences of vertices (like $(1, 2, 4, 5)$) to corresponding strings over the alphabet $\{1, 2, 3, 4, 5\}$ (like $1245$). 
\begin{enumerate}
    \item The first DFA, $\A_0$, (see \Cref{fig:a0}), ensures that the first vertex read is adjacent to all of the vertices that follow. For example, $\A_0$ accepts $1245$ since $1$ is adjacent to $2, 4, \text{ and } 5$. 
\item The second DFA, $\A_1$, (see \Cref{fig:a1}) ensures that the second vertex read is adjacent to all of the vertices that follow. For example, $\A_1$ accepts $1245$ since $2$ is adjacent to $4$ and $5$.  
\item  The third DFA, $\A_2$, (see \Cref{fig:a2}) ensures that the third vertex read is adjacent to the fourth vertex, and that there are only four vertices. For example, $\A_2$ accepts $1245$ because $4$ is adjacent to $5$. 
\end{enumerate}
Overall,  
$\A_0, \A_1\text{ and }  \A_2$ accept a word in common if and only if the word encodes a $4$-clique in $G$. 
\end{example}

\begin{proof}[Proof of \Cref{lem:kclique} (Sketch)]

Let $\Sigma = V$ be an alphabet. For every $k\geq 3$, our encoding maps sequences of $k$ vertices $v_0, v_1, \dots,  v_{k-1}$ to words of the form $v_0v_1\dots v_{k-1}$.

A sequence of vertices forms a $k$-clique whenever it satisfies the following $k-1$ conditions, where the $i$th condition is that  $$v_{i} \text{ is adjacent to each } v_j \text{ for }j \in [i+1,k-1].$$
For each of these conditions, we construct a DFA which verifies whether the word $w = v_0 v_1\dots v_{k-1}$ satisfies the condition. 
Note that the last DFA must ensure that exactly one vertex is read after $v_{k-2}$. 

For every $i \in \{0, 1, 2, \dots, k-2\}$, $\A_i$ on reading $w$ checks that $v_i$ is adjacent to all of the vertices after it (see \Cref{fig:a1}). 
 Each $\A_i$ consists of a chain of $i+1$ states, starting at the initial state, where each consecutive pair is connected with a $v$ edge for all vertices $v \in V$. At the $(i+1)$th state, on reading a letter in  $V$, the automaton goes to one of $n$ possible new states, one for each vertex. In particular, on reading some $v$ it goes to a state $q_v$, which has self-loops on letters $v'$ whenever $v$ is adjacent to $v'$, and no other outgoing edges. For every vertex $v \in V$, the state $q_v$ is final. 
The first $(k-2)$ DFA each have at most $n+k$ states, and $kn + n + 2m$ transitions.

The last DFA, $\A_{k-2}$, (see \Cref{fig:a2}) consists of a chain of $k-1$ states, starting at the initial state, with each consecutive pair in the chain being connected with a $v$ edge for all vertices $v \in V$. 
At the $(k-1)$th state, on reading a letter in  $V$, the automaton goes to one of $n$ possible new states, one for each vertex. In particular, on reading some $v$ it goes to the state $q_v$. 
At $q_v$, there is an edge to the unique final state $q_f$ labelled by the letter $v'$ whenever $v$ is adjacent to $v'$. There are no other outgoing edges from $q_v$. Hence, $\A_{k-2}$ ensures that overall exactly $k$ vertices are read, and that the last two are adjacent. This DFA has $k-2 + 2n$ states and $(k-2)n + 2m$ transitions.

Together, these $(k-1)$ automata ensure that a word is accepted in common if and only if the input graph contains a $k$-clique. 
All of these DFA have $\BigO(n)$ states, $\BigO(m)$ transitions and a $\BigO(n)$ sized alphabet.
Moreover, all of these automata can be constructed in $\BigO(km)$ time. 
\end{proof}

\begin{theorem}\label{thm:LB}
Let $k > 2$ be fixed. If for some real $\varepsilon > 0$ there exists a combinatorial $\BigO((k \cdot mn^{k-1})^{1-\varepsilon})$ time algorithm for DFA $k$-IE, then the~{Combinatorial $k$-Clique Hypothesis} is false. 
\end{theorem}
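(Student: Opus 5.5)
The plan is to reduce $(k+1)$-clique detection to DFA $k$-IE using \Cref{lem:kclique}, applied with clique size $k+1$. This produces exactly $k$ DFA, and the time bound of the assumed fast algorithm then translates into a combinatorial $\BigO(N^{k+1-\delta})$ algorithm for $(k+1)$-cliques. That algorithm would refute \Cref{hyp:kclique} for clique size $k+1$, which is the sense in which the hypothesis is ``false''.

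\textbf{Step 1 (the reduction).} Take a graph $G=(V,E)$ with $|V| = N$ and $|E| = M \le N^2$. Apply \Cref{lem:kclique} with parameter $k+1 > 2$. This gives $k$ DFA $\A_0,\dots,\A_{k-1}$ over the alphabet $V$ with the following properties.
\begin{itemize}
\item They accept a common word if and only if $G$ has a $(k+1)$-clique.
\item Each has $n = \BigO(N)$ states and $m = \BigO(M+N) = \BigO(N^2)$ transitions; more precisely, at most $(k+1)N + N + 2M$ transitions.
\item They are built in $\BigO(kN^2)$ time.
\end{itemize}
Because $k$ is fixed, all constants depending on $k$ are absorbed into the $\BigO$. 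The construction only writes down states and transitions, so it is clearly combinatorial.

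\textbf{Step 2 (running the assumed algorithm).} Feed these $k$ DFA to the hypothetical combinatorial algorithm for DFA $k$-IE. Its running time is
\[
\BigO\bigl((k\cdot m n^{k-1})^{1-\varepsilon}\bigr) = \BigO\bigl((N^2\cdot N^{k-1})^{1-\varepsilon}\bigr) = \BigO\bigl(N^{(k+1)(1-\varepsilon)}\bigr).
\]
This uses that the time bound is monotone in $m$ and $n$, so it is enough to plug in the upper bounds $m = \BigO(N^2)$ and $n = \BigO(N)$. The assumption $m \ge n$ from \Cref{sec:prelim} holds here, since each DFA has at least $N$ transitions on $\BigO(N)$ states; if needed, pad with dummy transitions without changing the asymptotics.

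\textbf{Step 3 (total time).} Adding the $\BigO(kN^2)$ construction time gives a combinatorial algorithm for $(k+1)$-clique detection running in $\BigO(N^{k+1-\delta})$ time, where $\delta = \min\{(k+1)\varepsilon,\; k-1\} > 0$. This contradicts \Cref{hyp:kclique} for cliques of size $k+1$.

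\textbf{Main obstacle.} The only delicate point is the bookkeeping that makes the exponents match. We must use the bound on the \emph{number of transitions} $m = \BigO(N^2)$, not the alphabet size or density, and confirm that the constants hidden in \Cref{lem:kclique}(ii) are independent of $N$. With $k$ fixed this holds, so the final exponent is exactly $k+1$ rather than something weaker.
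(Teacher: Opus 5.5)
Your proposal is correct and follows the paper's proof: you apply \Cref{lem:kclique} with clique size $k+1$ to obtain $k$ DFA with $\BigO(N)$ states and $\BigO(N^2)$ transitions, then substitute into the assumed bound to get a combinatorial $\BigO(N^{(k+1)(1-\varepsilon)})$-time algorithm for $(k+1)$-cliques. Your explicit choice $\delta = \min\{(k+1)\varepsilon, k-1\}$, which absorbs the construction time, is a detail the paper leaves implicit.
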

\begin{proof}
	Suppose there is an $\BigO((mn^{k-1}) ^{1-\varepsilon})$ time combinatorial algorithm for solving DFA $k$-intersection emptiness. 
		By \Cref{lem:kclique} we can reduce the $(k+1)$-Clique problem in $\BigO(km)$ time to the intersection emptiness problem for $k$ DFA each having $\BigO(n)$ states, $\BigO(m)$ transitions and a $\BigO(n)$ sized alphabet. 
	We can solve this in $\BigO((n^2\cdot n^{k-1})^{(1 -\varepsilon)}) = \BigO(n^{(k+1)(1 -\varepsilon)})$ time, by our assumption, which gives a $\BigO(n^{(k+1)(1 -\varepsilon)})$ time algorithm for detecting $(k+1)$-cliques, also combinatorial. 
\end{proof}

Naturally, a faster combinatorial algorithm for NFA $k$-IE would contradict the~{Combinatorial $k$-Clique Hypothesis} as well.  

\begin{corollary}\label{cor:LB}
Let $k > 2$ be fixed. If for some real $\varepsilon > 0$ there exists a combinatorial $\BigO((k \cdot mn^{k-1})^{1-\varepsilon})$ time algorithm for NFA $k$-IE, then the~{Combinatorial $k$-Clique Hypothesis} is false. 
\end{corollary}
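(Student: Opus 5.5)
The plan is to reduce directly to \Cref{thm:LB}, using the fact that every DFA is an NFA. Suppose $\mathcal{S}$ is a combinatorial algorithm for NFA $k$-IE that runs in time $\BigO((k \cdot mn^{k-1})^{1-\varepsilon})$ for some $\varepsilon > 0$.

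First, observe that a DFA $(Q, \Sigma, \delta, s, F)$ is a special case of an NFA: its transition relation is $\Delta = \{(q,\sigma,\delta(q,\sigma))\}$, where $(q,\sigma)$ ranges over the domain of $\delta$. Under the adjacency-list representation fixed in \Cref{sec:prelim}, this conversion is the identity on the input data, or at worst linear-time. It leaves the parameters unchanged: $n$ states, $m$ transitions and alphabet size $\ell$. It also preserves the accepted language.

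Therefore, $\mathcal{S}$ applied to $k$ DFA (viewed as NFA) decides DFA $k$-IE within the same time bound $\BigO((k \cdot mn^{k-1})^{1-\varepsilon})$. The algorithm stays combinatorial, because we add no matrix multiplication and the only overhead is a linear pass over the input, which is absorbed since $k\,mn^{k-1} \ge m$.

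Finally, apply \Cref{thm:LB} to this DFA $k$-IE algorithm. That theorem yields a combinatorial $\BigO(n^{(k+1)(1-\varepsilon)})$-time algorithm for $(k+1)$-clique detection, which refutes the Combinatorial $k$-Clique Hypothesis (\Cref{hyp:kclique}, instantiated for cliques of size $k+1$).

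The main point to check is that the parameters $m$ and $n$ for the DFA instances produced by \Cref{lem:kclique} match the parameters the NFA algorithm's bound refers to; they do, since the conversion changes nothing. With that confirmed, the argument is routine.
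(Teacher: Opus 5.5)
Your proposal is correct and matches the paper's argument. The paper justifies the corollary only by the remark that a DFA is a special case of an NFA, so a faster combinatorial algorithm for NFA $k$-IE is in particular one for DFA $k$-IE, and \Cref{thm:LB} then applies. Your additional checks make that one-line remark explicit: the parameters $n$ and $m$ are unchanged, no matrix multiplication is introduced, and the hypothesis is used at clique size $k+1$.
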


%% file: certificates.tex
\section{Certifying (non-)emptiness of \mbox{NFA intersection}}
\label{sec:certificates}

In this section, we show that the nodding product construction does not just give a faster algorithm for NFA $k$-IE, but also leads to a faster algorithm for verifying whether $k$ NFA have an empty intersection or not, using an appropriate \emph{certificate}. A certificate is an object that acts as a proof of the correctness of the algorithm's output. It can obtained as a  byproduct of the NFA $k$-IE decision procedure. We describe what these certificates look like, and algorithms for verifying the output of an instance of NFA $k$-IE given a certificate. For more on certification and certifying algorithms, see~\cite{DBLP:journals/csr/McConnellMNS11}.

We  show that verifying suitable certificates for NFA $k$-IE is faster than the decision procedure for NFA $k$-IE. More specifically, NFA $k$-IE has certificates that can be verified in $\BigO(k \ell \cdot  n^{k + \omega -2})$ time, which is strictly faster than our deterministic $\BigO(k \cdot m n^{k-1})$-time algorithm whenever $\ell n^{\omega-1} = o(m)$. 
For example, when $k=2$, $m = \BigO(n^2)$ and $\ell = \BigO(1)$, i.e., for two dense NFA over a constant alphabet, given a purported certificate, the verification algorithm runs in time $\BigO(k \ell n^{2 + \omega - 2}) = \BigO(n^{\omega})$ time while the decision procedure takes $\BigO(mn) = \BigO(n^3)$ time. 

The first subsection below is a warm-up towards our main contribution in this section
(which is described in \Cref{sec:staggered}).
Throughout the section, our NFA are of the form $\A_i = (Q_i, \Sigma, \Delta_i, s_i, F_i)$.

\subsection{Certifying NFA intersection non-emptiness: Short pathsets}

Given $k$ NFA $\knfa$, the certificate for intersection \emph{non-emptiness} is a sequence of $k$ accepting runs on a short enough word, which we call a short pathset. 

\begin{definition}Given $k$ NFA $\knfa$ with at most $n$ states each, a \emph{short pathset} is a sequence $(\rho_0, \dots, \rho_{k-1})$ of $k$ strings of transitions such that each $\rho_i$ is an accepting run in $\A_i$, they are all labelled by the same word $w$, and $|w| \leq n^k$.  
\end{definition}

\begin{observation}\label{thm:Negative certificates}
    Given NFA $\knfa$ with at most $n$ states in each:
    \begin{enumerate}
        \item If $\shortint \neq \emptyset$, there exists a short pathset. 
        \item If $\shortint= \emptyset$, no pathset exists.
        \item There exists an algorithm that, given $\knfa$ and $k$ strings of transitions $\rho_0, \dots , \rho_{k-1}$, verifies in  $\BigO(k \ell \cdot n^k)$ time  whether  $(\rho_0, \dots , \rho_{k-1})$ is a short pathset for $\knfa$. 
    \end{enumerate} 
\end{observation}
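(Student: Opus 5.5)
For part~1 I will use a pumping argument on the direct product $\A_0 \times \dots \times \A_{k-1}$ recalled in \Cref{sec:CPA}. Since $\shortint \neq \emptyset$, the direct product has an accepting run. Take a shortest such run. It visits a sequence of product states $(q^{(0)}_0,\dots,q^{(0)}_{k-1}), \dots, (q^{(t)}_0,\dots,q^{(t)}_{k-1})$, and no product state can repeat, since otherwise the loop could be cut out to get a shorter accepting run. There are at most $n^k$ product states, so $t+1 \le n^k$ and the word $w$ read has $|w| = t < n^k$. Projecting this run onto each coordinate $i$ gives an accepting run $\rho_i$ of $\A_i$ on $w$, so $(\rho_0,\dots,\rho_{k-1})$ is a short pathset.

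Part~2 is immediate. A pathset consists of accepting runs $\rho_i$ of every $\A_i$, all labelled by a common word $w$. Then $w \in \shortint$, which is impossible if the intersection is empty.

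For part~3 the verifier works as follows. It first checks that each $\rho_i$ has length at most $n^k$, reading at most $n^k+1$ transitions of each string and rejecting if the string is longer, which takes $\BigO(k n^k)$ time. For each $i$ it then scans $\rho_i = t_0\dots t_r$ and checks that:
\begin{itemize}
\item $t_0$ starts at $s_i$ (or, if $\rho_i$ is empty, that $s_i \in F_i$);
\item each $t_j$ belongs to $\Delta_i$;
\item consecutive transitions match, that is, $q'_j = q_{j+1}$;
\item the last target lies in $F_i$.
\end{itemize}
Finally it checks that the label sequences of all $\rho_i$ coincide, by comparing them letter by letter with those of $\rho_0$.

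The only real cost is the membership test $t_j \in \Delta_i$. With adjacency lists this costs up to the out-degree of the source state, and I bound that out-degree by $\ell n$, since a state has at most $n$ successors for each of the $\ell$ letters. Each test therefore takes $\BigO(\ell n)$ time. Testing every transition against its adjacency list then gives $\BigO(k \cdot n^k \cdot \ell n)$ in total, which is a factor $n$ too much. This is the main obstacle: the stated budget $\BigO(k\ell n^k)$ leaves only $\BigO(\ell)$ amortized time per transition.

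I would resolve it by preprocessing. In $\BigO(k\ell n^2)$ time, build the Boolean adjacency matrices $\Delta_i(\sigma)$ for all $i$ and $\sigma$; each membership test then takes $\BigO(1)$. This is within budget because $n^2 \le n^k$ for $k \ge 2$. All the checks together then take $\BigO(k n^k + k\ell n^2) = \BigO(k\ell \cdot n^k)$ time, as claimed.
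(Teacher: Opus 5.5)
Your proposal is correct and takes the same approach as the paper. The paper also checks in advance that every run has at most $n^k$ transitions, then verifies each transition by a lookup in the precomputed adjacency matrices $\Delta_i(\sigma)$; it charges $\BigO(\ell)$ per lookup where you charge $\BigO(1)$ plus $\BigO(k\ell n^2)$ preprocessing, and both fit the budget. The paper leaves parts~1 and~2 implicit, and your shortest-run pumping argument on the direct product is the standard justification of the $n^k$ length bound.
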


    For the running time bound, note that for each $i \in [k]$, if $\rho_i = (t_0, \dots , t_{|w|-1})$, we need to verify that each $t_j$ is a transition of $\Delta_i$. To do so, we construct the adjacency matrices for each NFA and input letter and look up the entry in the adjacency matrix $\Delta_i(\sigma)$, $\sigma \in \Sigma$, where $\sigma$ is the label of $t_j$. This takes $\BigO(\ell)$ time for each transition, and we can verify in advance that the runs are all at most $n^k$ transitions long.

Thus, the intersection non-emptiness of $k$ NFA can be \emph{certified} in $\BigO(k \ell \cdot n^k)$ time. An alternative view is that this is a co-nondeterministic algorithm for NFA $k$-IE. 

\begin{corollary}\label{cor:contime}
NFA $k$-IE is in $\mathsf{coNTIME}(k\ell \cdot n^k)$. 
\end{corollary}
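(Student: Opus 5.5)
The statement is Corollary~\ref{cor:contime}, which is immediate from Observation~\ref{thm:Negative certificates}. The plan is to view the verifier of item~3 as a nondeterministic procedure that decides the complement of NFA $k$-IE, namely intersection \emph{non-emptiness}. The procedure rejects the instance ``$\shortint = \emptyset$'' exactly when some guessed certificate is accepted.

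Concretely, on input $\knfa$, the nondeterministic machine proceeds as follows.
\begin{enumerate}
\item It guesses $k$ strings of transitions $\rho_0, \dots, \rho_{k-1}$, each of length at most $n^k$, so each string has $O(n^k)$ transitions.
\item It runs the deterministic check of item~3. This check confirms that each $\rho_i$ is a run of $\A_i$ from $s_i$ to a state in $F_i$, that all $\rho_i$ carry the same label word $w$, and that $|w| \le n^k$.
\item It answers ``non-empty'' if the check succeeds.
\end{enumerate}
The check is carried out in $O(k \ell \cdot n^k)$ time. Since guessing $k$ strings of length $O(n^k)$ also costs $O(k n^k)$ steps, the whole nondeterministic computation runs in $O(k\ell \cdot n^k)$ time.

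Correctness uses items~1 and~2. By item~1, if the intersection is non-empty some branch finds a short pathset. This follows from the standard pumping argument on the direct product: it has at most $n^k$ states, so a shortest common accepted word has length below $n^k$. By item~2, if the intersection is empty, no pathset exists at all, so every branch fails. Hence the complement of NFA $k$-IE is in $\mathsf{NTIME}(k\ell\cdot n^k)$, and therefore NFA $k$-IE is in $\mathsf{coNTIME}(k\ell \cdot n^k)$.

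The only delicate point is the verification time in item~3, which we may assume. It depends on two facts: the length bound $n^k$ is enforced \emph{before} scanning the runs, and each membership query $t_j \in \Delta_i$ costs $O(\ell)$ using the adjacency lists or matrices. One subtlety is that building the adjacency matrices costs $O(k\ell n^2)$. This is within $O(k\ell \cdot n^k)$ for $k \ge 2$, so it does not affect the bound.
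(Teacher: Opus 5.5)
Your proposal is correct and follows the paper's proof: guess a short pathset, verify it with the checker of item~3 of the Observation, and use items~1 and~2 to get correctness of this nondeterministic procedure for non-emptiness, which places NFA $k$-IE in $\mathsf{coNTIME}(k\ell\cdot n^k)$. Your added details (the $n^k$ length bound via a shortest accepting path in the direct product, and absorbing the $O(k\ell n^2)$ adjacency-matrix setup for $k\ge 2$) are sound and simply make explicit what the paper leaves implicit.
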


\begin{proof}
We give a nondeterministic algorithm for \emph{non-emptiness}.
Given NFA $\knfa$, if $\shortint $ $ \ne \emptyset$, the algorithm guesses
and checks a short pathset for them.
\Cref{thm:Negative certificates} provides the correctness guarantees
and the running time bound.
\end{proof}

\subsection{Certifying NFA intersection emptiness: Staggered cuts}
\label{sec:staggered}
We will now describe how to efficiently certify that the intersection \shortint for \knfa is \emph{empty}.
Our goal is to conveniently represent a cut in the transition graph of an automaton for \shortint.  Such a cut exists if and only if the intersection is empty. Here, we use the nodding product automaton $\nodprod$ since it is the smallest, and its transition graph allows us to efficiently represent and verify such a cut. 

Naively, a cut is a subset $R$ of the set of states such that:
\begin{enumerate}
\renewcommand{\labelenumi}{\theenumi)}
    \item $R$ contains the initial state,
    \item $R \cap F = \emptyset$, where $F$ is the set of final states, and
    \item no transition in $\Delta$ leads from $R$ to its complement.
\end{enumerate}
Given such a subset, the first two conditions can be checked with one pass, and the third can be checked by scanning through the transition relation $\Delta$ of \nodprod. Unfortunately, since  $\Delta_{\text{nodding}}$ has size $\Theta(mn^{k-1})$, this approach takes at least as long as deciding NFA $k$-IE. 
Instead, we utilise the structure of $\nodprod$. We check that no transition leaves $R$ using fast matrix multiplication with the $n \times n$ sized adjacency matrices of $\Delta_i$ instead of accessing $\Delta$ directly.

\begin{definition}
Given $k$ NFA $\knfa$ over an alphabet $\Sigma$, a \emph{staggered cut} is a collection $$(R(i, \sigma) :\ \text{for all $i \in [k]$ and all $\sigma \in \Sigma$})$$ where for each $i \in [k]$ and for all letters $\sigma \in \Sigma$:
	\begin{enumerate} 
        \item $R(i, \sigma) \subseteq Q_0 \times \dots  \times Q_{k-1}$.
        \item For all $\sigma, \sigma' \in \Sigma$, $R(0, \sigma) = R(0, \sigma')$.
        \item For all $\sigma \in \Sigma$, $(s_0, \dots , s_{k-1}) \in R(0, \sigma)$. 
		\item For all $\sigma \in \Sigma$,  $ (F_0 \times \dots \times F_{k-1}) \cap R(0,\sigma) = \emptyset$. 
		\item No transition leaves the staggered cut: for all $ \sigma \in \Sigma$ and for all $i \in [k]$, 
         \begin{multline*}
         \qquad\text{for all } \mathbf{q} = (q_0, \dots, q_{k-1}) \in R(i, \sigma), \\ \text{ if } (q_i, \sigma, p_i) \in \Delta_i,
         \text{ then }\mathbf{q'} = (q'_0, \dots, q'_{k-1})\in R(i+1 \bmod k, \sigma), \\
         \text{where $q'_j =  p_j$ if $ j = i$,  and $q'_j =  q_j $ otherwise. }
         \end{multline*}
	\end{enumerate}
\end{definition}

Note that in a staggered cut  there are $k\ell$ sets, each of size at most $n^k$
if every NFA among \knfa has at most $n$ states.

\begin{theorem}\label{lem:Positive certificates}
   Given NFA $\knfa$ with at most $n$ states in each:
    \begin{enumerate}
        \item If $\shortint = \emptyset$, there exists a staggered cut.
        \item If $\shortint \neq \emptyset$, no staggered cut exists.
\item There exists an algorithm that, given $\knfa$ and a collection of $k\ell$ subsets of $Q_0 \times \dots \times Q_{k-1}$, verifies in time $\BigO(k \ell \cdot n^{k + \omega - 2})$ whether  $R$ is a staggered cut of $\knfa$. 
    \end{enumerate} 
\end{theorem}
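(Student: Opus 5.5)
The plan is to identify a staggered cut with a cut in the transition graph of $\nodprod$, taken over the petal structure, and to prove the three items in order.

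\textbf{Item 1: existence.} Index the copies of $\prodqi$ in $\nodprod$ by pairs $(i,\sigma)$. The pair $(0,\sigma)$ is the base copy, shared by all petals, and for $i\in[1,k-1]$ the pair $(i,\sigma)$ is the $i$th copy in the $\sigma$-petal. Volley $i$ of the $\sigma$-petal leaves copy $(i,\sigma)$, enters copy $(i+1 \bmod k,\sigma)$, and moves only the $Q_i$ component along a $\sigma$-transition of $\A_i$. If $\shortint=\emptyset$, then by \Cref{thm:nodding} no final state of $\nodprod$ is reachable. I would set $R(i,\sigma)$ to be the set of tuples $\mathbf q$ such that the state $\mathbf q$ in copy $(i,\sigma)$ is reachable from the initial state. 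Then:
\begin{itemize}
\item Condition 2 holds because the base copy is shared.
\item Condition 3 holds because the initial state is reachable.
\item Condition 4 holds because the final states $F_0\times\dots\times F_{k-1}$ in the base copy are unreachable.
\item Condition 5 holds because reachability is closed under the volleys.
\end{itemize}

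\textbf{Item 2: no cut when the intersection is nonempty.} Suppose $w=\sigma_0\cdots\sigma_{t-1}$ lies in the intersection, with accepting runs in every $\A_i$. Interleave these runs exactly as in the proof sketch of \Cref{thm:nodding}. This gives a run of $\nodprod$ that traverses the $\sigma_0$-petal, then the $\sigma_1$-petal, and so on. By induction along this run, every visited state lies in the corresponding set $R(i,\sigma)$:
\begin{itemize}
\item the start is covered by condition 3;
\item each step is covered by condition 5;
\item passing through the base copy between petals is covered by condition 2, since $R(0,\sigma_j)=R(0,\sigma_{j+1})$.
\end{itemize}
The run ends in $F_0\times\dots\times F_{k-1}$ inside $R(0,\sigma_{t-1})$, which contradicts condition 4.

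\textbf{Item 3: the verifier, and the main obstacle.} Conditions 1–4 can be checked directly in $O(k\ell\cdot n^k)$ time by storing each $R(i,\sigma)$ as a 0/1 array indexed by $\prodqi$. The main difficulty is checking condition 5 without enumerating the $\Theta(mn^{k-1})$ transitions of $\nodprod$.

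Fix $i$ and $\sigma$. Reshape the indicator of $R(i,\sigma)$ into an $n^{k-1}\times n$ Boolean matrix $M$: rows are indexed by the tuple of all components other than $i$, and columns by $q_i$. Reshape the indicator of $R(i+1 \bmod k,\sigma)$ into a matrix $N$ the same way. Condition 5 for $(i,\sigma)$ then says the following. Compute the Boolean product $P=M\cdot\Delta_i(\sigma)$, so that $P[\mathbf r,p_i]=1$ iff some $q_i$ with $M[\mathbf r,q_i]=1$ has a $\sigma$-transition to $p_i$. Condition 5 holds iff $P\le N$ entrywise.

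To compute $P$, split $M$ into $n^{k-2}$ blocks of size $n\times n$ and multiply each block by $\Delta_i(\sigma)$ with fast matrix multiplication. A Boolean product is obtained from an integer product by thresholding, and all entries stay at most $n$. Each pair $(i,\sigma)$ therefore costs $O(n^{k-2}\cdot n^\omega)=O(n^{k+\omega-2})$.

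Summing over the $k\ell$ pairs gives $O(k\ell\cdot n^{k+\omega-2})$. Constructing the adjacency matrices $\Delta_i(\sigma)$ takes $O(k\ell n^2)$ and the conditions 1–4 checks take $O(k\ell\cdot n^k)$; both are dominated by this bound.
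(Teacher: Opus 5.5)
Your proposal is correct and follows the paper's proof essentially step for step. Item 1 takes the reachable sets of the copies of \nodprod. Item 2 retraces an accepting run through the sets. Item 3 is exactly the paper's check $\text{Out}_{p,\sigma}\cdot\Delta_p(\sigma)\le\text{In}_{p+1,\sigma}$ on $n^{k-1}\times n$ matrices, split into $n^{k-2}$ square blocks. Your version is slightly more explicit than the paper's sketch: it invokes condition 2 when passing through the base copy, and it notes that the Boolean product is obtained by thresholding the integer product. The one corner left open is $w=\varepsilon$ in item 2, which follows at once from conditions 3 and 4.
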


\begin{proof}[Proof (Sketch)]
    \begin{enumerate}
        \item If $\shortint = \emptyset$, then we know no final state of \nodprod is reachable from the initial state. Let $R(0)$ be the set of tuples $(\qvec)$ such that the base copy state $(\qvec, \varepsilon)$ is reachable from the initial state in $\nodprod$. Clearly, $(s_0, \dots, s_{k-1}) \in R(0)$. For all letters $\sigma \in \Sigma$, let $R(0, \sigma) = R(0)$. For all $\sigma \in \Sigma$ and $i \in [1, k-1]$, let $R(i,\sigma)$ be the set of tuples $(\qvec)$ such that $(\qvec, i, \sigma)$ is reachable from the initial state in $\nodprod$. Whenever $(\qvec) \in R(i, \sigma)$, we say that this tuple \emph{represents} the state $(\qvec, i, \sigma)$ for an appropriate choice of $i \in [k]$ and $\sigma \in \Sigma \cup \{\epsilon\}$. Thus, we have formed a collection of $k\ell$ subsets $R(i, \sigma)$ of $Q_0 \times \dots \times Q_{k-1}$ that contain the representative of the initial state of \nodprod, contain no representative of its final states, and since they each represent the set of all reachable states in a given copy, all transitions in $\nodprod$ whose source represents an element of $R(i,\sigma)$ must end in the representative of some its successor. Thus, our choice of $(R(i,\sigma))$ is in agreement with the definition of staggered cut.
        
        \item Suppose $\shortint \neq \emptyset$, and a staggered cut $R$ exists. Let the shortest word in $\shortint$ be $w$. By applying the transitions that form an accepting run of $w$ in $\nodprod$, one starts at the initial state and eventually reaches a final state. Since $(s_0, \dots, s_{k-1}) \in R(0, \sigma)$ for some $\sigma$, and no transition may leave $R$, by retracing the same transitions on the elements of sets of $R$, either a transition must leave $R$ or it must be the case that $R$ contains a representative of a final state. Both scenarios contradict $R$ being a staggered cut. 
        
        \item The certificate checker can check whether all the $R(0, \sigma)$ are the set $R(0)$, and that initial states are present and final states are absent in $R(0)$ in $\BigO(k \cdot n^k)$-time. In order to ensure that no transition leaves $R$, the certificate checker stores each subset in two data structures. Given any $i \in [k]$ and any $\sigma \in \Sigma$, the certificate checker stores each subset $R(i,\sigma)$ using two $n^{k-1} \times n$ dimensional adjacency matrices, $\text{In}_{i, \sigma}$ and $\text{Out}_{i, \sigma}$. Let elements of $\{0, \dots, n^{k-1}-1\}$ be represented as $(k-1)$-tuples $(j_0, \dots , j_{k-1})$ where each $j_i \leq n$.
\begin{gather*}
\text{In}_{p, \sigma}[ (j_0, \dots , j_{k-1}), i] = 
            1 \text{ iff } (q_{j_0}, \dots , q_{j_{p-2}}, q_i, q_{j_{p-1}}, \dots , q_{j_{k-1}}) \in R(p, \sigma).\\
\text{Out}_{p, \sigma}[ (j_0, \dots , j_{k-1}), i] = 
            1 \text{ iff } (q_{j_0}, \dots , q_{j_{p-1}}, q_i, q_{j_{p}}, \dots , q_{j_{k-1}}) \in R(p, \sigma).
\end{gather*}
    \end{enumerate}

The $\text{In}_{p, \sigma}$ matrix allow for easy access to the $Q_{p-1}$ component of a copy, and the $\text{Out}_{p, \sigma}$ matrix provides easy access to the $Q_p$ component of the copy. Since outgoing transitions only update the $Q_p$ component and ingoing transitions only update the $Q_{p-1}$ component, this is all we need. 

Given the $\text{In}$ and $\text{Out}$ matrices, for each $p \in [k]$, given $\Delta_p(\sigma)$ also in adjacency matrix form, i.e., $\Delta_p[i,j] = 1 \iff (q_i, \sigma, q_j) \in \Delta_p$, to ensure no transition leaves $R$ the checker ensures that $$\text{Out}_{p, \sigma} \cdot \Delta_p(\sigma) \leq \text{In}_{p+1, \sigma}$$ 

These matrices can be constructed from $R$ in $\BigO(k\ell \cdot n^k)$ time. All of these inequalities can be checked using fast matrix multiplication by splitting up the $\text{Out}$ matrices into $n^{k-2}$ many $n \times n$ matrices, so overall this check takes $\BigO(k\ell \cdot n^{k + \omega - 2})$ time.  
 \end{proof}

Hence, NFA $k$-IE has a certificate that can be verified in $\BigO(k \ell \cdot n^{k + \omega -2})$ time. An alternative viewpoint is that this is a nondeterministic algorithm for  NFA $k$-IE.

\begin{corollary}\label{cor:ntime}
NFA $k$-IE is in $\mathsf{NTIME}(k \ell \cdot n^{k + \omega - 2})$. 
\end{corollary}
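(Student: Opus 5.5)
The plan is to derive \Cref{cor:ntime} from \Cref{lem:Positive certificates}, treating the staggered cut as a nondeterministically guessed certificate. This mirrors how \Cref{cor:contime} follows from \Cref{thm:Negative certificates}. The nondeterministic algorithm accepts an instance $\knfa$ precisely when $\shortint = \emptyset$, i.e.\ on the yes-instances of NFA $k$-IE.

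The algorithm has three steps. First, it guesses $k\ell$ subsets $R(i,\sigma) \subseteq Q_0 \times \dots \times Q_{k-1}$, each encoded as a bit vector of length at most $n^k$. Writing these down costs $\BigO(k\ell \cdot n^k)$ time, which is dominated by the target bound because $\omega \ge 2$. Second, it runs the verifier from part~3 of \Cref{lem:Positive certificates} in $\BigO(k\ell \cdot n^{k+\omega-2})$ time, and accepts if and only if the verifier confirms that the guess is a staggered cut.

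Correctness comes from parts~1 and~2 of the same theorem. If $\shortint=\emptyset$, some staggered cut exists, so some branch accepts. If $\shortint \neq \emptyset$, no staggered cut exists, so every branch rejects. The total time on every branch is $\BigO(k\ell \cdot n^k + k\ell \cdot n^{k+\omega-2}) = \BigO(k\ell\cdot n^{k+\omega-2})$, which places the problem in $\mathsf{NTIME}(k\ell\cdot n^{k+\omega-2})$.

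The only point needing care is that $\omega$ is defined as an infimum, so an $\BigO(n^\omega)$ matrix multiplication algorithm need not exist. I would handle this the same way the paper does in \Cref{lem:Positive certificates}: read the bound as holding for every exponent above $\omega$ for which a multiplication algorithm exists. I would also note that the verifier's matrix products are deterministic computations and can be performed on each nondeterministic branch. I expect no step here to be a real obstacle; all the substantive work is already in \Cref{lem:Positive certificates}.
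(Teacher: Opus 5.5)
Your proposal is correct and matches the paper's proof: guess a staggered cut, check it with the verifier from part~3 of \Cref{lem:Positive certificates}, and get correctness from parts~1 and~2. Your extra remarks are details the paper leaves implicit rather than handling explicitly: the $O(k\ell\cdot n^k)$ cost of writing down the guess is dominated because $\omega\ge 2$, and the $n^{\omega}$ factor should be read as an exponent arbitrarily close to $\omega$.
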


\begin{proof}
Given $k$ NFA $\knfa$ such that $\shortint = \emptyset$, the nondeterministic algorithm guesses
and checks a staggered cut.
\Cref{lem:Positive certificates} provides the correctness guarantees
and the running time bound.
\end{proof}
 
\begin{remark*}
When the languages of $\knfa$ have non-empty intersection (empty intersection, respectively), our decision procedure for NFA $k$-IE (algorithm of \Cref{thm:UB}) can be modified to store and output a short pathset (a staggered cut, respectively) for the instance without adding to the running time. 
\end{remark*}

\subsection{A tight SETH lower bound for NFA $k$-IE is unlikely}

We already mentioned in \Cref{sec:intro}
that an $O(n^{k-\epsilon})$-time algorithm for DFA (and thus NFA) $k$-IE
would refute 
SETH, the Strong Exponential Time Hypothesis, by a result of
Wehar~\cite[Cor.~7.19]{wehar2017complexity}.
Our lower bound of $m n^{k-1}$ (\Cref{thm:LB}) is based
on the Combinatorial $k$-Clique Hypothesis~\cite{AbboudBW15a}.
SETH is, in comparison,
more widely used.
One can ask whether it is also possible to establish hardness results
similar to \Cref{thm:LB} but based on SETH.
We show that, in fact, a tight SETH-based lower bound for NFA $k$-IE is unlikely. This is a consequence of our previous result, that our certificate checkers for NFA $k$-IE (\Cref{cor:contime} and \Cref{cor:ntime}) run faster than the deterministic algorithm for NFA $k$-IE (\Cref{thm:UB}).

We need a notion of reduction that distinguishes between
the running times $\BigO(n^k)$ and $\BigO(n^{k+1})$.

Let pairs $(\Lang,T)$ correspond to assertions
that (membership in) a language $\Lang$ can be decided in $T$~time.

\begin{definition}[Fine-grained reduction~\cite{WilliamsFG2018}]
	The pair $(A, a(n))$ is said to be fine-grained reducible to $(B, b(n))$, which we denote as \begin{equation*}(A, a(n)) \leq_{FGR} (B, b(n)),\end{equation*} if for every $\epsilon > 0$ there exists $\delta > 0$ such that there is a deterministic Turing reduction from $A$ to $B$ which runs in time at most $a(n)^{1-\delta}$, making $c$ calls to an oracle for $B$ with query lengths $n_0, \dots , n_{c-1}$, where
$\sum_{i=0}^{c-1} b(n_i)^{1-\epsilon} \leq a(n)^{1-\delta}$. 
\end{definition}

The existence of such a reduction means that for any real $\epsilon > 0$, if there was a $b(n)^{1-\epsilon}$ time algorithm for deciding $B$, then there exists $\delta > 0$ such that $A$ can be decided in $a(n)^{1-\delta}$ time. In this way, we can transfer savings in the exponent of one problem's running time, to the other problem.

\begin{hypothesis}[SETH = Strong Exponential Time Hypothesis~\cite{IP01}; \textup{see also \cite{WilliamsFG2018,Bringmann19}}]
        For every $\epsilon > 0$, 
        there exists $k$ such that $k$-SAT is not in $\mathsf{DTIME}[2^{n(1-\epsilon)}]$, where $k$-SAT is the language of all satisfiable Boolean formulas in $k$-CNF.
         \end{hypothesis}

The following hypothesis strengthens SETH even further.

		 \begin{hypothesis}[NSETH = Nondeterministic Strong Exponential Time Hypothesis~\cite{CarmosinoGIMPS16}]
       For every $\epsilon > 0$, 
        there exists $k$ such that $k$-TAUT is not in $\mathsf{NTIME}$$[2^{n(1-\epsilon)}]$, where $k$-TAUT is the language of all tautological Boolean formulas in $k$-DNF.
		     
		 \end{hypothesis}

A refutation of NSETH would mean a `more efficient' proof system for
propositional tautologies, a breakthrough in proof complexity.
This would also imply new circuit lower bounds~\cite{CarmosinoGIMPS16}.

We will now state Corollary 2 of Theorem 2 in~\cite{CarmosinoGIMPS16}, where the authors show that if a problem has a certification scheme that runs in, say, time $\BigO(n^i)$, then an SETH-based lower bound stronger than $\Omega(n^i)$ would contradict NSETH. 

\begin{theorem}[Carmosino et al., 2016]\label{thm:SETHnotNSETH}
Given a problem $C \in \mathsf{NTIME}(T_C) \cap \mathsf{coNTIME}(T_C)$, if NSETH holds, then $(B, T_B) \not \leq_{FGR} (C, T_C^{1+\gamma})$ for every decision problem $(B, T_B)$ such that $(\textsc{SAT}, 2^n) \leq_{FGR} (B, T_B)$ and every $\gamma > 0$. 
\end{theorem}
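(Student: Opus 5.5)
We argue by contradiction, using a fine-grained reduction from SAT to $(C, T_C^{1+\gamma})$ to speed up the nondeterministic verification of $k$-TAUT below $2^n$. Fix $C \in \mathsf{NTIME}(T_C) \cap \mathsf{coNTIME}(T_C)$ and $\gamma > 0$, and suppose $(B,T_B) \le_{FGR} (C, T_C^{1+\gamma})$ for some $(B,T_B)$ with $(\textsc{SAT},2^n) \le_{FGR} (B,T_B)$.

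The first step is transitivity: we compose the two reductions to obtain $(\textsc{SAT}, 2^n) \le_{FGR} (C, T_C^{1+\gamma})$. Given $\epsilon$ for $C$, the second reduction supplies $\delta_1$. We then use $\delta_1$ as the ``$\epsilon$'' for $B$ in the first reduction, which supplies $\delta_2$. The oracle calls to $B$ on queries of length $n_i$ can be answered within $\sum_i T_B(n_i)^{1-\delta_1}$ time in total, together with further calls to $C$. The total running time is therefore bounded by $2^{n(1-\delta_2)}$, up to the bookkeeping of summed query lengths. This bookkeeping is routine but fiddly, and we would rely on the standard fact that fine-grained reductions compose.

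Next, choose $\epsilon = \gamma/(1+\gamma)$, so that $(T_C^{1+\gamma})^{1-\epsilon} = T_C$. The composed reduction is a deterministic Turing reduction which, for this $\epsilon$, runs in time $2^{n(1-\delta)}$ for some $\delta>0$ and makes queries $y_j$ to $C$ with $\sum_j T_C(|y_j|) \le 2^{n(1-\delta)}$. We simulate it nondeterministically, answering each query $y_j$ by guessing the answer together with a certificate. For a ``yes'' answer the certificate is an $\mathsf{NTIME}(T_C)$ witness, and for a ``no'' answer it is a $\mathsf{coNTIME}(T_C)$ witness for the complement; each is checked in time $O(T_C(|y_j|))$. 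Along every branch that accepts all of its guessed certificates, every oracle answer is correct, so the simulation decides SAT correctly. Hence SAT can be decided, and in particular its complement can be certified, in nondeterministic time $2^{n(1-\delta)}$.

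Finally, a $k$-DNF $\varphi$ is a tautology iff the $k$-CNF $\neg\varphi$ is unsatisfiable, so we obtain $k$-TAUT $\in \mathsf{NTIME}[2^{n(1-\delta)}]$ uniformly in $k$, with $\delta$ independent of $k$. This contradicts NSETH with its $\epsilon$ taken to be $\delta$.

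The main obstacle is this last uniformity. The reduction $(\textsc{SAT},2^n)\le_{FGR}(B,T_B)$ must be understood, as in SETH-hardness, as applying to $k$-SAT for all $k$ with a single $\delta$ depending only on $\epsilon$. We would state this convention explicitly, following Carmosino et al.
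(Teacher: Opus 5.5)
Your argument is correct and follows the route the paper indicates. The paper does not prove this theorem itself: it cites Carmosino et al.\ and only sketches that a fine-grained reduction from SAT, composed with the $\mathsf{NTIME}\cap\mathsf{coNTIME}$ certificate checkers for $C$, yields a sub-$2^n$ nondeterministic algorithm for TAUT. Your transitivity step, the choice $\epsilon=\gamma/(1+\gamma)$, the guess-and-verify simulation of oracle answers, and the passage from $k$-DNF to $k$-CNF make that sketch precise. The uniformity you flag is not a real obstacle: reading \textsc{SAT} as CNF-SAT parametrised by the number of variables, every $k$-CNF is an instance, so a single $\delta$ serves all $k$.
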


Because we have a certificate scheme whose checker is faster than the deterministic algorithm for NFA $k$-IE, then, assuming NSETH,  $(\textsc{SAT}, 2^n) \not \leq_{FGR} (\text{NFA }k\text{-IE}, \BigO(n^{k-1}m)$). The idea behind \Cref{thm:SETHnotNSETH} is that a fine-grained reduction from SAT could be composed with the certificate checkers, to yield a nondeterministic algorithm for TAUT.

For our problem, such a reduction ensures that any faster algorithm for NFA $k$-IE yields a strictly faster than $\mathsf{DTIME}[2^n]$ algorithm for SAT, as well as for TAUT. When composed with the certificate checkers, each oracle call to the certificate checkers runs strictly faster than the oracle calls to NFA $k$-IE's decision procedure. Hence, this would result in the overall running time of the nondeterministic algorithm being strictly less than $2^{n}$, and in fact $O(2^{n (1 - \epsilon)})$.

\begin{theorem}\label{cor:NFAnotSETH}
If NSETH holds, then for every $\gamma > 0$ and for all decision problems  $(B, T_B)$ such that $(\textsc{SAT}, 2^n) \leq_{FGR} (B, T_B)$, we have $(B, T_B) \not \leq_{FGR} ( \text{NFA }k\text{-IE}, (\ell n^{k + \omega -2})^{1+\gamma})$.
\end{theorem}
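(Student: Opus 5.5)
The plan is to derive the statement directly from \Cref{thm:SETHnotNSETH}, taking $C$ to be NFA $k$-IE and $T_C = \ell n^{k+\omega-2}$ (up to constants depending on the fixed $k$). The whole task is then to check the hypothesis $C \in \mathsf{NTIME}(T_C) \cap \mathsf{coNTIME}(T_C)$.

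\textbf{Membership in $\mathsf{NTIME}$.} \Cref{cor:ntime} gives NFA $k$-IE $\in \mathsf{NTIME}(k\ell \cdot n^{k+\omega-2})$. With $k$ fixed this is $\mathsf{NTIME}(O(\ell n^{k+\omega-2}))$.

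\textbf{Membership in $\mathsf{coNTIME}$.} \Cref{cor:contime} gives NFA $k$-IE $\in \mathsf{coNTIME}(k\ell\cdot n^k)$. Since $\omega \ge 2$, we have $n^k \le n^{k+\omega-2}$. Hence NFA $k$-IE also lies in $\mathsf{coNTIME}(O(\ell n^{k+\omega-2}))$.

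Together these show $C$ is in both classes with the common bound $T_C$.

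\textbf{Applying \Cref{thm:SETHnotNSETH}.} Assume NSETH. Fix $\gamma>0$ and any $(B,T_B)$ with $(\textsc{SAT},2^n)\le_{FGR}(B,T_B)$. \Cref{thm:SETHnotNSETH} then yields $(B,T_B)\not\le_{FGR}(C,T_C^{1+\gamma})$, which is exactly the claim.

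\textbf{Main obstacle: parameters and constants.} The fine-grained reduction framework measures $T$ as a function of the input length $N$. Here, by contrast, the bounds are stated in $n$ and $\ell$, and the input size is roughly $k\ell n^2$ in adjacency form, or $km$ in list form. I would fix an input encoding and read $T_C$ as a function of the instance's parameters, as the paper does throughout.

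Two points need checking under that encoding:
\begin{itemize}
\item The constant factors $k$ and $O(\cdot)$ must be absorbed. This works because membership in $\mathsf{NTIME}(T_C)$ is meant up to constants, and a constant factor $c$ in $T_C^{1+\gamma}$ does not affect the reduction definition (the $1-\epsilon$ exponents swallow it for large inputs).
\item The certificate lengths must be verifiable within the time bound. This is already guaranteed by \Cref{thm:Negative certificates} and \Cref{lem:Positive certificates}.
\end{itemize}

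If one insists on stating the bound exactly as $(\ell n^{k+\omega-2})^{1+\gamma}$, I would handle the constant by noting that $k\ell n^{k+\omega-2} \le (\ell n^{k+\omega-2})^{1+\gamma/2}$ for all sufficiently large instances. One then applies \Cref{thm:SETHnotNSETH} with $\gamma/2$ and $T_C$ chosen accordingly, and the conclusion for $\gamma$ follows by monotonicity of $\le_{FGR}$ in the target time bound.
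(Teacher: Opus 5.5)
Your proposal is correct and follows the paper's proof: you combine \Cref{cor:ntime} and \Cref{cor:contime}, making explicit the step $n^k \le n^{k+\omega-2}$ that the paper uses silently, and apply \Cref{thm:SETHnotNSETH}, with the fixed factor $k$ absorbed by the slack in the definition of $\le_{FGR}$. One small slip in your optional final remark: if $T_C=(\ell n^{k+\omega-2})^{1+\gamma/2}$ and you apply the theorem with $\gamma/2$, the exponent becomes $(1+\gamma/2)^2>1+\gamma$, and monotonicity only transfers non-reducibility to larger target bounds, so the step fails as written; instead take $T_C=k\ell n^{k+\omega-2}$ and use $(k\ell n^{k+\omega-2})^{1+\gamma/2}\le(\ell n^{k+\omega-2})^{1+\gamma}$ for large instances, or keep your $T_C$ and apply the theorem with $\gamma'=\gamma/(2+\gamma)$.
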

\begin{proof}
By \Cref{cor:contime} and \Cref{cor:ntime}, NFA $k$-IE belongs to $\mathsf{coNTIME}(k \ell \cdot n^{k + \omega - 2})\, \cap \, \mathsf{NTIME}(k \ell \cdot n^{k + \omega - 2})$. 

By \Cref{thm:SETHnotNSETH}, no SETH-based lower bound can be stronger than $\BigO(k \ell \cdot n^{k + \omega - 2})$, unless NSETH fails. 
Whenever $\ell n^{\omega -1} = o(m)$, this means that a tight SETH lower bound for our $\BigO(n^{k-1}m)$ time algorithm contradicts NSETH. 
\end{proof}

In words, a stronger than $\ell n^{k+\omega-2}$ lower bound for NFA $k$-IE based on SETH
is unlikely: it would refute NSETH.

%% file: extras.tex
\section{Discussion}

\paragraph{Transition complexity.}
We have constructed NFA (and $\epsilon$-NFA) for
languages of the form \shortint, where \knfa are NFA
with at most $n$ states in each.
The advantage over the standard direct product construction
is that our product automata are sparser, that is,
have fewer transitions.
An interesting open question is whether even sparser
constructions exist. 
Proving lower bounds on the number of transitions
in nondeterministic automata is a delicate task;
we refer the reader to publications~\cite{GruberH07,HromkovicS07}
and to Section~2.1.2 in the book chapter~\cite{GruberHK-handbook}
for an introduction to the state of the art.

Let an NFA recognising a language $L$ be called transition-minimal if, among all the NFA recognising $L$, it has the fewest number of transitions. The transition complexity of a language $L$ is the number of transitions in a transition-minimal NFA that recognises $L$.  
Given two regular languages, the transition complexity of their intersection does not exceed the product of their individual transition complexities~\cite{DomaratzkiS07}. 
Our \Cref{thm:nodding} shows that this bound is not optimal when the two languages have dense transition-minimal NFA. 

Comparing \Cref{thm:nodding} with the standard direct product also suggests the following interpretation: to obtain an automaton for the intersection of two languages, if one wants to reduce the number of transitions, one pays a factor of $|\Sigma|$ in the number of states. 
This is reminiscent of a conjecture due to Domaratzki and Salomaa~\cite{DomaratzkiS08}: Given a regular language $L$ over an alphabet $\Sigma$, let the minimum number of states an NFA needs to recognise the language be $n$. Then, the NFA recognising $L$ that has the least number of transitions, must have at most $|\Sigma|\cdot n$ states. 

The automata delivered by \Cref{thm:nodding} have $\epsilon$-transitions, but
our other two product constructions, the catch-up and leapfrog product, due to being $\epsilon$-free NFA, are much larger than the nodding product when the alphabet of interest grows. Still, the catch-up and leapfrog products can be smaller than the direct product in the worst case, even when the alphabet grows. For an alphabet of size $\ell$, over $n$ states, the (dense) worst case for the upper bounds of Theorems~\ref{thm:catch-up} and~\ref{thm:leapfrog} is when $m = \ell n^2$ and $m_{\leq k} = n^2$. Let $k$ be fixed. The size (number of states and transitions) of the direct product is then $\BigO(\ell n^{2k})$, while the size of both the catch-up and the leapfrog automata is bounded above by $\BigO(\ell^{k}n^{k-1}m_{\leq k}) = \BigO(\ell^kn^{k+1})$. If $\ell = o(n)$, then $\ell^k n^{k+1} = o(\ell n^{2k})$. Hence, whenever the number of letters in $\Sigma$ grows slower than the bound $n$ on the number of states, the worst case bounds for all of our products improve upon those of the direct product.

\paragraph{Fine-grained complexity.}
Fix $k \ge 2$.
For the intersection emptiness problem for $k$ DFA,
Wehar~\cite{wehar2017complexity} proves that, for every real $\epsilon > 0$,
an $n^{k-\epsilon}$-time deterministic algorithm would
falsify SETH.
At the same time, best existing algorithms for this problem run
in time $O(\ell \cdot n^k)$, where $\ell = |\Sigma|$;
which is the same as $O(m n^{k-1})$ if the DFA are complete.
If $\ell$ may grow, there is a gap between the lower and upper bound.
One of our contributions is to close this gap, strengthening the lower bound.
Theorem~\ref{thm:LB} shows that an $(m n^{k-1})^{1-\epsilon}$-time algorithm
(already for the problem on DFA, not NFA)
would falsify the Combinatorial $k$-Clique Hypothesis.
Our hard instances have automata with $|\Sigma| = \Theta(n)$.
Thus, it is the \emph{density} of the state-transition diagrams
(in particular, $m$ may be as high as $n^2$) that our reduction exploits.
Previously known reductions do not give tight bounds
for the case $|Q| = o(|\Delta|)$, that is, with
the dense state-transition diagram.

In the case $|\Sigma| = O(1)$, existing lower bounds
for the intersection emptiness problem for $k$ DFA are tight.
For NFA, there was previously a gap between $\Omega(n^k)$ and $O(m^k)$.
Our algorithmic result improves the upper bound to $O(m n^{k-1})$ but
the bounds are still a factor of $m/n$ apart.
If $k = 2$, the recent NFA acceptance hypothesis~\cite{BringmannGKL24} is that
the running time of $O(m n)$ is necessary even if one
of the two NFA accepts exactly one word.

\paragraph{Relation satisfaction problems.}
The starting point of our constructions in \Cref{sec:sparser}
is \Cref{obs:fact}. The \emph{$k$-stuttering} of the intersection of $k$ regular languages\footnote{`$2$-stuttering' means that each letter is duplicated in all words; with $k$, we replace each $a \in \Sigma$ with $a^k$; see the end of Sec.~\ref{sec:nod}.} is the same as the interleaving of the languages intersected with $(a^k + b^k)^*$. This is because the $k$-tape automaton that realises the equality relation between regular languages is very similar to an NFA that recognises $(a^k + b^k)^*$. By `$k$-tape automaton' we refer to a nondeterministic automaton with a finite set of control states and $k$ one-way read-only tapes, where each transition also specifies one of the $k$ tapes, from which an input letter is read.

\Cref{obs:fact} can be generalised to check whether $k$ regular languages `satisfy' a \emph{rational relation}~\cite{DBLP:journals/eatcs/Choffrut06}. Rational relations are exactly the relations recognised by one-way $k$-tape automata (for more details see Chapter 4, section 1.6 of~\cite{Sakarovitch_2009}).

We define the $k$-Relation Satisfaction problem ($k$-RS) as follows. Fix $k$. Given $k$ NFA $\A_1, \dots ,A_k$ as input, with $n$ states and $m$ transitions, over an alphabet of size $\ell$, and a $k$-tape automaton $\mathcal{C}$ with $n_{\mathcal{C}}$ states and $m_{\mathcal{C}}$ transitions. NFA $\A_1, \dots,A_k$ are said to \emph{satisfy} $\mathcal{C}$ if there exist words $w_1 \in \Lang(\A_1), \dots, w_k \in \Lang(\A_k)$ such that $(w_1, w_2, \dots, w_k) \in \Lang(\mathcal{C})$.

NFA $k$-IE is a special case of the more general NFA $k$-RS problem, but
in fact, when the alphabet is fixed, NFA $k$-RS is inter-reducible with NFA $(k+1)$-IE in a `fine-grained' sense; see \Cref{app:controlled}.

\paragraph{Further directions.}
Our constructions of sparser product automata
may be applied to automata other than NFA.
For verification, B\"uchi automata as well as other automata
on infinite words are a natural target, and the idea
of interleaving several runs should apply.
Automata with data structures, such as pushdown automata,
can be explored, but we have not identified fine-grained algorithmic
improvements so far.

We also leave it for future work to explore applications in graph databases.
Queries to graph databases can often be interpreted in automata-theoretic terms.
In the recent years, fine-grained lower bounds for regular path query evaluation
were obtained by Casel and Schmid~\cite{CaselS23};
and for Boolean conjunctive queries by Fan, Koutris, and Zhao~\cite{FanKZ23}.
(Casel and Schmid also discuss the `product graph approach', but when a query
arises from a regular expression, the corresponding automaton is sparse.)
We refer the reader to Durand's survey~\cite{Durand20} for a broader overview.
A popular type of query related to the \emph{intersection} of regular languages
is \emph{conjunctive} regular path queries; see, e.g., Barcelo et~al.~\cite{BarceloFL13}.

%% file: cartesian.tex
\section{The standard Cartesian product automaton}\label{sec:CPA}

\begin{definition}[Direct product automaton]
    Given $k$ NFA $\knfa$ such that for all $i \in [k-1]$ $\A_i = (Q_i, \Sigma, \Delta_i, s_i, F_i)$, the direct product automaton $\A_C = (Q, \Sigma, \Delta, \mathbf{s}, F)$ is defined as follows: 
\begin{enumerate}
    \item $Q = Q_0 \times Q_1 \times \dots \times Q_{k-1}$, where $\times$ denotes the Cartesian product of sets. 
    \item $\mathbf{s} = (s_0, s_1, \dots , s_{k-1})$. 
    \item $F = F_0 \times F_1 \times \dots \times F_{k-1}$. 
\item $\Delta = \{((\qvec), \sigma, (q'_0, \dots , q'_{k-1})) \mid \text{$(q_i, \sigma, q'_i) \in \Delta_i$ for all $i \in [k-1]$} \} $. 
\end{enumerate}
\end{definition}

For a fixed $k$, given $k$ NFA ${\knfa}$ with at most $n$ states and $m$ transitions each, over an $\ell$-letter alphabet, the direct product automaton $\A_C$ contains $\BigO(n^k)$ states and $\BigO(m^k)$ transitions. 

%% file: proof-shuffle-observation.tex
\section{Proof of Observation 1}\label{sec:interleavingproof}
In this section, we prove the observation used in \Cref{sec:sparser} as the basis of our sparser automata for recognising the intersection. 

Let $\klangs \sset \{a_1, \ldots, a_\ell\}^*$. Then\vspace*{-1ex}
\begin{equation*}
\bigcap_{i=0}^{k-1} \Lang_i = [(\Shuffle_{i=0}^{k-1} \Lang_i ) \cap (a_0^k + \dots + a_{\ell-1}^k)^*]\mathlarger{{\upharpoonright}}_{0 \bmod k}\;.
\end{equation*} 

\begin{proof}
    Let a word $w$ belong to  the intersection $\bigcap_{i=0}^{k-1} \Lang_i$, i.e., for all $i\in [k-1]$, $w \in \Lang(\A_i)$. Let $w'$ be the unique word that belongs to $ \Shuffle_{i=0}^{k-1} \{w\}$. We see that $\Shuffle_{i=0}^{k-1} \{w\} = \{w'\} \subseteq (\Shuffle_{i=0}^{k-1} \Lang_i )$. At the same time, $\Shuffle_{i=0}^{k-1} \{w\} = \{w'\}$ also belongs to $(a_0^k + \dots + a_{\ell-1}^k)^*$, since within every $k$-length block of letters, all the letters match up. Hence,  $w' \in [(\Shuffle_{i=0}^{k-1} \Lang_i ) \cap (a_0^k + \dots + a_{\ell-1}^k)^*]$, and since $\{w\} = [\Shuffle_{i=0}^{k-1} \{w\} ]\mathlarger{{\upharpoonright}}_{0 \bmod k}$, $$\bigcap_{i=0}^{k-1} \Lang_i \subseteq [(\Shuffle_{i=0}^{k-1} \Lang_i ) \cap (a_0^k + \dots + a_{\ell-1}^k)^*]\mathlarger{{\upharpoonright}}_{0 \bmod k}.$$

    Conversely, let a word $w$ belong to $[(\Shuffle_{i=0}^{k-1} \Lang_i ) \cap (a_0^k + \dots + a_{\ell-1}^k)^*]\mathlarger{{\upharpoonright}}_{0 \bmod k}$. There must exist some $w' \in [(\Shuffle_{i=0}^{k-1} \Lang_i ) \cap (a_0^k + \dots + a_{\ell-1}^k)^*]$ such that $w = w' {\upharpoonright_{0 \bmod k}}$. Since $w' \in (a_0^k + \dots + a_{\ell-1}^k)^*$, for all $i \in [k-1]$, $w'{\upharpoonright_{i \bmod k}}$ are all equal to $w$. Since $w' \in \Shuffle_{i=0}^{k-1} \Lang_i  $, this means for all $i \in [k-1]$, $w\in \Lang(\A_i)$, which proves that $w \in \bigcap_{i=0}^{k-1} \Lang_i$. Hence, $$\bigcap_{i=0}^{k-1} \Lang_i \supseteq [(\Shuffle_{i=0}^{k-1} \Lang_i ) \cap (a_0^k + \dots + a_{\ell-1}^k)^*]\mathlarger{{\upharpoonright}}_{0 \bmod k}.$$

    We conclude that \begin{equation*}
\bigcap_{i=0}^{k-1} \Lang_i = [(\Shuffle_{i=0}^{k-1} \Lang_i ) \cap (a_0^k + \dots + a_{\ell-1}^k)^*]\mathlarger{{\upharpoonright}}_{0 \bmod k}\;.
\qedhere
\end{equation*} 
\end{proof}

%% file: leapfrogproof.tex
\section{The leapfrog product construction}
\label{app:leapfrog}
\begin{figure}
    \centering
    \includegraphics[width=0.55\linewidth]{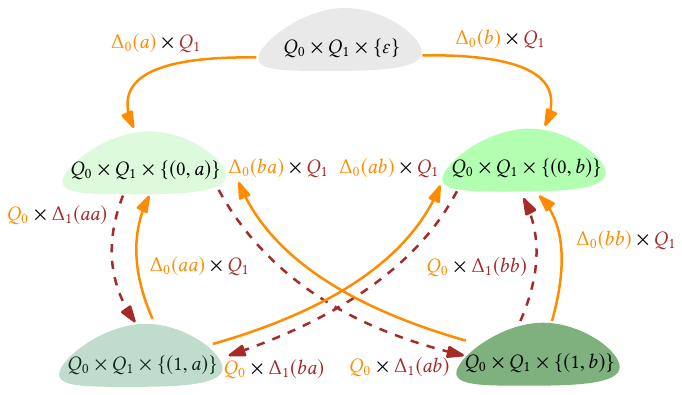}
    \caption{Leapfrog automaton with state-update labels}
    \label{fig:leapfrogappx}
\end{figure}
\subsection{Intersection of two NFA} Given $2$ NFA $\A_0$ and $\A_1$, the leapfrog product automaton $\A_{\text{leapfrog}}$ is defined as follows: $$\A_{\text{leapfrog}} = (Q, \Sigma, \Delta, s, F)$$ Where the states $Q$ consists of $2$ copies of $Q_0 \times  Q_{1}$ for each letter, as well as one copy of $Q_0 \times  Q_{1}$ for the empty word. That is,  $Q = Q_0 \times Q_1 \times \Sigma \times \{0,1\} \cup Q_0 \times Q_1 \times \{\varepsilon\}$ .  The $\{0, 1\}$ component in the letter labelled copies denote which automaton is currently behind in the computation. The initial state is $(s_0, s_1, \varepsilon)$. 

For each  $\sigma_0\sigma_1 \in \Sigma^2$, the set  $\Delta$ contains:
\begin{itemize}
\item $(p_0, q_0, 0, \sigma_0) \xrightarrow[]{\sigma_1} (p_2, q_0, 1, \sigma_1)$ whenever $\A_0$ has the path $p_0 \xrightarrow[]{\sigma_0} p_1 \xrightarrow[]{\sigma_1} p_2$  and $q_0 \in Q_1$. 
\item $(p_0, q_0, 1, \sigma_0) \xrightarrow[]{\sigma_1} (p_0, q_2, 0, \sigma_1)$ whenever $\A_1$ has the path $q_0 \xrightarrow[]{\sigma_0} q_1 \xrightarrow[]{\sigma_1} q_2$ and $p_0 \in Q_0$.
\end{itemize}

Additionally, for each  $\sigma \in \Sigma^2$, the set $\Delta$ contains:
$$(p_0, q_0, \varepsilon) \xrightarrow[]{\sigma} (p_1, q_0, 1, \sigma)\text{ whenever }p_{0} \xrightarrow[]{\sigma} p_{1} \text{ in }\A_{0} \text{ and } q_0 \in Q_1.$$

In words, for all $\sigma \in \Sigma$, the $\varepsilon$-labelled copy is connected to the $(1, \sigma)$-copy by a volley that reads $\sigma$ and updates the $Q_0$ component as if $\A_0$ had read $\sigma$. For all $2$ letter words $u = \sigma_0 \sigma_1$, for all letters $\sigma \in \Sigma$,  and for all $i$ in $\{0, 1\}$, the $(i, \sigma_0)$-labelled copy is connected to the $(i+1 \bmod k, \sigma_2)$ labelled copy by a volley that reads $\sigma_1$ and updates the $Q_i$ component as if $\A_i$ had read $\sigma_0\sigma_1$. 

In the $\varepsilon$-copy, $F_0\times F_1$ are the only final states. For each letter $\sigma \in \Sigma$ and for each $i$ in $\{0,1\}$, in the $(i, \sigma)$-copy, the final states are $\{(q_0, q_1) \mid  q_i \in F_i \text{ and } q_{1-i} \xrightarrow[]{\sigma} q'_{1-i} \text{ in }\A_{1-i} \text{ and } q'_{1-i} \in F_{1-i}\}$. That is, the $Q_i$ component must be accepting, and the other component must be able to read $\sigma$ and reach a final state.

\subsection{Intersection of $k \geq 2$ NFA} Given $k$ NFA $\A_0, \dots , \A_{k-1}$, the leapfrog product automaton $\A_{\text{leapfrog}}$ is defined as follows: $$\A_{\text{leapfrog}} = (Q, \Sigma, \Delta, s, F)$$ Where the states $Q$ consists of $k$ copies of $Q_0 \times \dots \times Q_{k-1}$ for each $k-1$ length word, as well as one copy of $\prod_{i=0}^{k-1}Q_i$ for every shorter word. That is,  $Q = \left((\prod_{i=0}Q_i) \times \Sigma^{k-1} \times \{0, \dots, k-1\}\right) \cup \left((\prod_{i=0}Q_i) \times \Sigma^{\leq k-2} \right)$ .  The initial state is the initial vector in the $\varepsilon$-copy. That is, $s = (s_0, \dots , s_{k-1}, \varepsilon)$. The $\{0, \dots , k-1\}$ component in the $\Sigma^{k-1}$-labelled copies denote which automaton is currently furthest behind in the computation.

\subsubsection*{Transitions of $\A_{\text{leapfrog}}$}
For all words $u$ of length up to $k-1$, the set  $\Delta$ contains:
\begin{description}
\item[If $\bm{|u| < k-2}$:\quad] $(q_0, \dots , q_{k-1}, u) \xrightarrow[]{\sigma} (q'_0, \dots , q'_{k-1}, u\sigma)$ whenever $q'_j = q_j$ for all $j \neq |u|$ and $q_{|u|} \xrightarrow[]{u\sigma} q'_{|u|}$ is a path in $\A_{|u|}$.
\item[If $\bm{|u| = k-2}$:\quad] $(q_0, \dots ,q_{k-1}, u) \xrightarrow[]{\sigma} (q'_0, \dots ,q'_{k-1}, k-1 , u\sigma)$ for all letters $\sigma \in \Sigma$, whenever  $q'_j = q_j$ for all $j \neq |u|$ and $q_{|u|} \xrightarrow[]{u\sigma} q'_{|u|}$ is a path in $\A_{|u|}$. 
\item[If $\bm{u = \sigma_0 \ldots \sigma_{k-1}}$:\quad] $(q_0, \dots ,q_{k-1}, i, u) \xrightarrow[]{\sigma} (q'_0, \dots , q'_{k-1}, i + 1 \bmod k, \sigma_1 \dots \sigma_{k-1}\sigma)$ for all letters $\sigma \in \Sigma$, whenever $q'_j = q_j$ for all $i$ in $\{0, \dots, k-1\}$, and for all $j \neq i$, and $q_{i} \xrightarrow[]{u\sigma} q'_{i}$ is a path in $\A_{i}$. 
\end{description}

In words, all of the $\Sigma^{\leq k-2}$-labelled copies are connected in a full $|\Sigma|$-ary tree starting from the $\varepsilon$ copy. Let $u$ be a word such that $|u| < k-2$.  For all $\sigma \in \Sigma$, the $u$-labelled copy is connected to the $u\sigma$-copy by a volley that reads $\sigma$ and updates the $Q_t$ component as if $\A_t$ had read $u\sigma$. Once we reach a copy labelled with a $k-2$ letter word, it is time to transition to the main body of the automaton. For any $|u| = k-2$ and any $\sigma \in \Sigma$, the $u$-copy is connected to the $(k-1, u\sigma)$-labelled copy with a volley that reads $\sigma$ and updates $\A_{k-2}$ as if it had read $u\sigma$. These transitions act as an initialisation phase, setting each consecutive NFA one step ahead of the last, and storing the entire word each component has read so far.   

Once we reach the $\Sigma^{k-1}$-labelled copies, initialisation is over, and now each component NFA will update as if it has read a $k$ length word and storing its last $k-1$ letters for the next component NFA. Lastly, for all $k-1$ letter words $u = \sigma_0 \dots \sigma_{k-1}$, for all letters $\sigma \in \Sigma$,  and for all $i$ in $\{0, \dots , k-1\}$, the $(i, u)$-labelled copy is connected to the $(i+1 \bmod k, v)$ labelled copy, where $v = \sigma_1\dots \sigma_{k-1}\sigma$, i.e., the word formed by erasing the first letter of $u$ and appending $\sigma$. The volley between the $(i, u)$ copy and the $(i+1 \bmod k)$ copy reads $\sigma$ and updates the $Q_i$ component as if $\A_i$ had read $u\sigma$. 

\subsubsection*{Final states of $\A_{\text{leapfrog}}$}
In the $\varepsilon$-copy, $F_0\times F_1$ are the only final states. For each word $u = \sigma_1 \dots \sigma_{i}$ with $i\leq k-2$, in the $u$-copy, the final states consist of states $(q_0, \dots, q_k, u)$ where $q_i$ is final, and for all $j < i$, the state  $q_j$ can reach a final state in the $Q_j$ component by reading $\sigma_{j+1} \dots \sigma_i$, and for all $j > i$, the state  $q_j$ can reach a final state by reading $u$ . For each word $u = \sigma_1 \dots \sigma_{k-1}$, and every $i$ in $\{0, \dots, k-1\}$, in the $(i, u)$-copy, the final states consist of states $(q_0, \dots, q_k, i, u)$ where $q_i$ is final, and for all $j \neq i$, the state  $q_j$ can reach a final state in the $Q_j$ component by reading $\sigma_{(i-j) \bmod k} \dots \sigma_{k-1}$, i.e., the $(j-i) \bmod k$ length suffix of $u$.  Visually, the brick jutting out the furthest in the brick diagram is the $i$th one, and the rest of the components need to catch up according to the suffix of $u$ they haven't managed to update yet (see \Cref{fig:leapfrogrun}). 

\subsection{Correctness of $\A_{\text{leapfrog}}$}
We will now prove that $\A_{\text{leapfrog}}$ accepts the intersection of the languages of $\A_0, \dots, \A_{k-1}$, demonstrate bounds on its size, and the time needed to construct it. 

\begin{proof}
    Suppose a word $w = \sigma_0\dots\sigma_{t-1}$ is accepted by $\A_{\text{leapfrog}}$.     If $w$ is less than $k-1$ letters long, its run ends in the unique $w$-labelled copy. Since it is accepted, it must have reached a final state in the $w$-copy, which ensures that the $Q_{|w|}$ component is a final state, and all the $Q_j$ for $j < |w|$ components can process the last $|w|-j$ letters of $w$ and reach a final state, and all the $Q_j$ components for $j > |w|$, (which must be starting states since those components have not updated yet), by definition must be able to read $w$ and reach a final state. 
    
    If $w$ has at least $k-1$ letters, we reason as follows. Given a labelled path, let a \emph{path with $k$-skips} starting at $i$ be a sequence of vertices and labels that starts at some index $i < k$, and only visits the $i \bmod k$ positions of the path, and reads only the $i\bmod k$ positions' labels. 
    
    For each run $\pi$ of $w$ on $\A_{\text{leapfrog}}$, the $Q_i$ component of the the restriction $\pi {\upharpoonright_{i \bmod k}}$ is represented as the following path with $k$-skips:  $q_i \xrightarrow[]{\sigma_i} q_{i+k} \xrightarrow[]{\sigma_{i + k}} \dots \xrightarrow[]{\sigma_{i + kx}} q_{i+kx}$ where $x$ is the greatest integer such that $i+kx <t$.  This path with $k$-skips is always the path with $k$-skips of a corresponding run $\rho$ of $\A_i$ on $w$, starting at $i$. Hence, if $\A_{\text{leapfrog}}$ can reach $(q_0, q_1, \dots , q_k, t \bmod  k, u)$ from the initial state on reading $w$, then each component $\A_i$  can start at its initial state, read $\sigma_0 \dots \sigma_{i + kx}$ and reach $q_i$. 
    
    Suppose $w= \sigma_0 \dots \sigma_{t-1}$ was accepted by \frogprod. If on reading $w$, \frogprod reached the final state $(q_0, q_1, \dots , q_k, t \bmod  k, u)$, then in each component $Q_i$ where $i \neq t \bmod k$, one could start at $q_i$, read the last $(t-i) \bmod k$ letters  and reach a final state of $\A_i$, i.e., one has to be able to read $\sigma_{i + kx + 1} \dots \sigma_t$ and accept. If $i=t\bmod k$ then $q_i$ itself must have been a final state in $\A_i$. Since in the $Q_i$ component one can read up to $\sigma_{i + kx}$ and reach $q_i$, and from $q_i$ one can read  $\sigma_{i + kx + 1} \dots \sigma_t$ and accept, this means $w$ has an accepting run in $\A_i$. 

    Conversely, if a word  $w= \sigma_0 \dots \sigma_{t-1}$ belongs to the intersection of $\A_0, \dots, \A_{k-1}$, then it has an accepting run $\rho_i$ in each $\A_i$. Using these runs, we construct an accepting run in \frogprod. Every time \frogprod is in a copy that represents $\A_i$, we mimic the path with $k$-skips starting at $i$ of $\rho_i$.  In any given component, say $Q_i$, start by ensuring that when reaching the $\sigma_0 \dots \sigma_i$-copy, the first transition in the $Q_i$ copy performs the $i$ first moves of the accepting run in $\A_i$, and thereafter whenever in a $(i, u)$-labelled copy update $Q_i$ component by the next $k$ moves of the accepting run of $\A_i$ on $w$. Such a run in $\A_{\text{leapfrog}}$ will necessarily end in a final state in the $Q_{t \bmod k}$ component, and in all other components $Q_i$ the state one reaches must be able to read the $t-i \bmod k$ suffix and reach a final state, since there were exactly $t - i \bmod k$ updates that $\A_i$ did not have time to perform, and $\A_i$ has an accepting run on $w$. Hence, $\A_{\text{leapfrog}}$ recognises the intersection $\bigcap_{i=0}^{k-1} \Lang(\A_i)$. 

    The number of states we use is no more than $k$ copies of $\prod_{i = 0}^{k-1} Q_i$ for every word in $\Sigma^{<k}$, i.e., at most $2k \ell^{k-1} \cdot n^k$. Like the catch-up product, for every $w$ of length up to $k$, for every $i$ in $\{0, \dots, k-1\}$, for every element of $m_{\leq k}$ in $\A_i$ and for every $(k-1)$-tuple of states not in $Q_i$, we have exactly one transition. Hence, the number of transitions is at most $2k \ell^k \cdot m_{\leq k } \hspace{0.25em}n^{k-1}$. 

    The time taken for constructing this automaton depends on the number of states and transitions, $\BigO(k \ell^k \cdot m_{\leq k } \hspace{0.25em}n^{k-1})$, and the complexity of constructing the $u$-reachability relations ($\Delta_i^{u}$ for all $i \in [k]$ and $u \in \Sigma^{\leq k}$) which takes $\BigO(k^2 \ell^k n^{\omega})$. Overall this takes at most $\BigO( k \ell^k \cdot m_{\leq k } \hspace{0.25em}n^{k-1} +k^2 \ell^k n^{\omega} )$ time. 
\end{proof}

%% file: relationsatisfactionproof.tex
\section{Controlled Shuffling}
\label{app:controlled}
In \Cref{sec:sparser}, we start by using the idea that the $k$-stutter of the intersection of $k$ regular languages is the same as the interleaving of the languages intersected with $(a^k + b^k)^*$. This is because the $k$-tape automaton that realises the equality relation between regular languages is very similar to an NFA that recognises $(a^k + b^k)^*$. In the nodding product, we ensure that the component automata move one at a time, to avoid pairing up transitions. In order to ensure they all read the same word, the nodding product mimics the structure of the equality $k$-tape automaton, intuitively letting it control which letters are read by each component. 

In this section, we generalise this observation to allow checking whether $k$ regular languages satisfy any rational relation.

\subsection{Fine-grained complexity of NFA $k$-RS}

\paragraph{Rational relations and $k$-tape automata} We begin by defining the automaton model we call $k$-tape automata, as the Rabin and Scott model for one-way $k$-tape automata; see~\cite[Chap.~4, sect.~1.6]{Sakarovitch_2009} and~\cite{RabScott}. 

\begin{definition}
A \emph{$k$-tape automaton} $\mathcal{C}$ is a tuple $(Q, \Sigma, \Delta, s, F )$ consisting of a finite set of states $Q$, a finite alphabet $\Sigma$, an initial state $s \in Q$, a set of final states $F \subseteq Q$ and a transition relation $\Delta \subseteq Q \times (\Sigma \times [k]) \times Q$.  
\end{definition}

 Each transition $(q, (\sigma, i), q') \in Q \times (\Sigma \times [k]) \times Q$ selects the $i$th of the $k$ read-only tapes, consumes one symbol of input from the selected tape, and updates the current state from $q$ to $q'$. At a given cell, if the $i$th tape is chosen, the automaton reads the letter written in the current cell and moves one step forwards on that tape, remaining stationary on all other tapes.
 
We say the initial tape configuration of $\mathcal{C}$ is $(w_0, w_1, \dots , w_{k-1})$ if the contents of the tapes, for all $i \in [k]$, consists of $w_i$ on the $i$th tape, followed by an endmarker. We say a $k$-tuple of words $(w_0, w_1, \dots , w_{k-1})$ is \emph{accepted} by a \emph{$k$-tape automaton} $\mathcal{C}$ if $\mathcal{C}$ has a run that starts at the initial state, with initial tape configuration $(w_0, w_1, \dots , w_{k-1})$, and reaches a final state, with all $k$ reading heads ending at the endmarkers of their respective tapes. 

The set of $k$-tuples a $k$-tape automaton recognises is denoted by $\Lang(\mathcal{C})$. This is also known as the relation $\mathcal{C}$ recognises. The class of relations $k$-tape automata recognise is known as the $k$-ary rational relations~\cite{DBLP:journals/eatcs/Choffrut06}. 
The special case of $2$-tape automata are usually referred to as transducers, the interpretation being that symbols on the first tape are inputs being read, and symbols on the second tape are outputs produced by the transducer. 

\begin{definition} The $k$-Relation Satisfaction problem ($k$-RS) takes as input $k$ NFA \knfa with $n$ states and $m$ transitions each, over an alphabet of size $\ell$, and a $k$-tape automaton $\mathcal{C}$ with $n_{\mathcal{C}}$ states and $m_{\mathcal{C}}$ transitions. NFA \knfa are said to satisfy $\mathcal{C}$ if there exist words $w_0 \in \Lang(\A_0), \dots, w_{k-1} \in \Lang(\A_{k-1})$ such that $(w_0, w_1, \dots, w_{k-1}) \in \Lang(\mathcal{C})$.
\end{definition}

With this definition, clearly NFA $k$-IE is a special case of the more general NFA $k$-RS problem. 
\begin{lemma}\label{lem:EqRelation}
    Given $k$ NFA with at most $n$ states and $m$ transitions each over an alphabet of size $\ell$, NFA $k$-IE reduces to NFA $k$-RS with the same $k$ NFA and an $\BigO(k \ell)$-sized $k$-tape automaton that recognises the equality relation. This reduction takes $\BigO(k(n + m + \ell))$ time. 
\end{lemma}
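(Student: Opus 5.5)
The plan is to build explicitly a $k$-tape automaton $\mathcal{C}_{=}$ recognising the equality relation $\{(w,\dots,w) \mid w \in \Sigma^*\}$, modelled on the nodding product (equivalently, on the NFA for $(a_0^k+\dots+a_{\ell-1}^k)^*$ from \Cref{obs:fact}). Its states are a base state $s$ together with states $(i,\sigma)$ for $i \in [1,k-1]$ and $\sigma \in \Sigma$. For each $\sigma \in \Sigma$ the $\sigma$-petal is the cycle
\[
s \xrightarrow{(\sigma,0)} (1,\sigma) \xrightarrow{(\sigma,1)} (2,\sigma) \xrightarrow{(\sigma,2)} \cdots \xrightarrow{(\sigma,k-2)} (k-1,\sigma) \xrightarrow{(\sigma,k-1)} s .
\]
The state $s$ is both initial and the only final state. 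This gives $1+(k-1)\ell$ states and $k\ell$ transitions, so $\mathcal{C}_{=}$ has size $\BigO(k\ell)$. The reduction maps an instance $\knfa$ of NFA $k$-IE to the instance $(\knfa, \mathcal{C}_{=})$ of NFA $k$-RS.

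\textbf{Correctness.} It suffices to show $\Lang(\mathcal{C}_{=}) = \{(w,\dots,w) \mid w\in\Sigma^*\}$. Once this holds, the NFA satisfy $\mathcal{C}_{=}$ iff there are $w_i \in \Lang(\A_i)$ with all $w_i$ equal, that is, iff $\shortint \neq \emptyset$.
\begin{itemize}
\item[($\supseteq$)] Given $(w,\dots,w)$ with $w=\sigma_0\cdots\sigma_{t-1}$, traverse the $\sigma_0$-petal, then the $\sigma_1$-petal, and so on. Each petal advances every head by exactly one cell, on the correct letter. After $t$ petals the run is back in $s$ with every head on its endmarker.
\item[($\subseteq$)] Take an accepting run. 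Every run from $s$ to $s$ is a concatenation of complete petals, because the only way out of $s$ on a petal is to follow that petal's cycle back to $s$. Each $\sigma_j$-petal reads exactly the letter $\sigma_j$ once from each tape. By induction on the number of petals, after $j$ petals every head has consumed exactly the prefix $\sigma_0\cdots\sigma_{j-1}$. Since acceptance requires every head to be at its endmarker, each tape held exactly $\sigma_0\cdots\sigma_{t-1}$, so the accepted tuple is $(w,\dots,w)$.
\end{itemize}
The only care needed is the endmarker condition. It is what forces all tapes to have the same length, since no head can stop early or run on.

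\textbf{Running time.} The reduction copies the $k$ NFA, taking $\BigO(k(n+m))$ time, and writes $\mathcal{C}_{=}$, taking $\BigO(k\ell)$ time. The total is $\BigO(k(n+m+\ell))$.

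I expect the main obstacle to be the $\subseteq$ direction, namely arguing cleanly that accepting runs decompose into full petals and that the endmarker condition forces equal words. The petal structure makes this a simple induction, so the obstacle is one of care rather than difficulty.
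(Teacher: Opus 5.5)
Your proposal is correct and takes essentially the paper's approach. The paper gives no separate proof of this lemma, but your petal automaton is exactly the $k$-tape automaton from the proof of \Cref{thm:kIEtoRS} specialised to a one-state $\A_k$ accepting $\Sigma^*$ (equivalently, the NFA for $(a_0^k+\dots+a_{\ell-1}^k)^*$ with tape indices attached), and your correctness argument, including the endmarker point, and your $\BigO(k(n+m+\ell))$ timing are sound.
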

Moreover, we can reduce NFA $(k+1)$-IE to NFA $k$-RS.
\begin{theorem}\label{thm:kIEtoRS}
    Given $k+1$ NFA with at most $n$ states and $m$ transitions each, over an alphabet of size $\ell$, NFA $(k+1)$-IE reduces to NFA $k$-RS for $k$ NFA with at most $n$ states and $m$ transitions over an $\BigO(\ell)$-sized alphabet, as well as a $k$-tape automaton at most $k \ell \cdot n$ states and $m + (k-1)\ell \cdot n$ transitions. This reduction takes $\BigO(m + k\ell \cdot n)$ time. 
\end{theorem}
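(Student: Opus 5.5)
The reduction folds one of the $k+1$ NFA, say $\A_k$, into the $k$-tape automaton $\mathcal{C}$. $\mathcal{C}$ must accept $(w_0,\dots,w_{k-1})$ iff all $w_i$ equal a common word $w$ and $w\in\Lang(\A_k)$. Then $\A_0,\dots,\A_{k-1}$ satisfy $\mathcal{C}$ iff some $w$ lies in all $k+1$ languages. The $k$ NFA are passed unchanged, so they keep at most $n$ states, $m$ transitions and alphabet $\Sigma$. The work is to build $\mathcal{C}$ within the stated size. I would model it on the nodding product of \Cref{sec:nod}, with $k$ tapes in place of the $k$ component automata and $\A_k$ supplying the control.

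\textbf{States and transitions.} Take $Q_{\mathcal{C}} = Q_k \cup (Q_k \times \Sigma \times [1,k-1])$, with base copy $Q_k$, initial state $s_k$ and final states $F_k$ (base copy only). This gives at most $n + (k-1)\ell n \le k\ell n$ states. For each transition $(q,\sigma,q')\in\Delta_k$, add $q \xrightarrow{(\sigma,0)} (q',\sigma,1)$; this is at most $m$ transitions. For each $q\in Q_k$, $\sigma\in\Sigma$ and $i\in[1,k-2]$, add $(q,\sigma,i)\xrightarrow{(\sigma,i)}(q,\sigma,i+1)$, and also add $(q,\sigma,k-1)\xrightarrow{(\sigma,k-1)} q$. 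This is $(k-1)\ell n$ transitions, so the total is at most $m+(k-1)\ell n$. Each petal thus reads $\sigma$ on tape $0$ while $\A_k$ moves on $\sigma$, then ``echoes'' $\sigma$ on tapes $1,\dots,k-1$ in order. Construction is a single scan of $\Delta_k$ plus enumeration of the $(k-1)\ell n$ echo transitions, i.e.\ $\BigO(m+k\ell n)$ time. The case $k=1$ degenerates harmlessly: there are no echo states, and transitions $(q,\sigma,q')$ become $q\xrightarrow{(\sigma,0)}q'$.

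\textbf{Correctness, soundness direction.} An accepting run starts and ends in the base copy. It therefore decomposes into complete petals, each of which reads the same letter $\sigma_j$ once from every tape. Since all heads must reach their endmarkers, each tape content is exactly $\sigma_0\cdots\sigma_{t-1}=w$. Projecting the run onto the base-copy $Q_k$-component gives an accepting run of $\A_k$ on $w$. Hence any satisfying tuple $(w_0,\dots,w_{k-1})$ has $w_i=w$ for all $i$, with $w\in\Lang(\A_k)$ and $w\in\Lang(\A_i)$ for $i<k$.

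\textbf{Correctness, completeness direction.} Given $w$ in the intersection of all $k+1$ languages, choose an accepting run of $\A_k$ on $w$. Replay it petal by petal to obtain an accepting run of $\mathcal{C}$ on $(w,\dots,w)$.

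The main obstacle is the exact size accounting, in particular keeping the transition count at $m+(k-1)\ell n$ rather than $k m$. The key choice is to let only tape $0$'s volley consult $\Delta_k$, while the echo volleys are deterministic and indexed by $(q,\sigma)$. A secondary point is to state precisely the convention that heads must finish at the endmarkers, since this is what forces equal lengths.
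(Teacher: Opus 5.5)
Your proposal is correct and is essentially the paper's construction. It uses the same state space $Q_k \cup (Q_k\times\Sigma\times[1,k-1])$, the same petals reading $\sigma$ on tapes $0,\dots,k-1$ in turn, the same bounds ($k\ell n$ states, $m+(k-1)\ell n$ transitions) and the same correctness argument. The only difference is cosmetic: the paper applies the $\Delta_k$-move on the last volley (reading tape $k-1$) and makes the first $k-1$ volleys the deterministic echoes, whereas you apply it on the first volley.
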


It turns out NFA $k$-RS also reduces to NFA $(k+1)$-IE, by modifying the $k$ input NFA and the $k$-tape automaton $\mathcal{C}$ into appropriate NFA. 

\begin{theorem}\label{thm:kRStoIE}
    Given $k$ NFA with at most $n$ states and $m$ transitions each and a $k$-tape automaton with $n_{\mathcal{C}}$ states and $m_{\mathcal{C}}$ transitions, over an alphabet of size $\ell$, NFA $k$-RS reduces to NFA $(k+1)$-IE for  $k$ NFA with at most $\BigO(n)$ states and $\BigO(m)$ transitions, as well as a new NFA with $\BigO(n_{\mathcal{C}})$ states and $\BigO(m_{\mathcal{C}})$ transitions, over an $\BigO(k\ell)$-sized alphabet. This reduction takes $\BigO(k^2(n + m + l))$ time. 
\end{theorem}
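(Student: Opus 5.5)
We reduce NFA $k$-RS to NFA $(k+1)$-IE by working over a tagged alphabet $\Gamma = \Sigma \times [k]$, of size $k\ell$, and letting words over $\Gamma$ encode interleavings of the $k$ tapes. A word $u \in \Gamma^*$ is read as a sequence of moves of $\mathcal{C}$: the letter $(\sigma,i)$ means ``read $\sigma$ from tape $i$''. For $u \in \Gamma^*$ and $i \in [k]$, let $\pi_i(u) \in \Sigma^*$ be obtained by keeping the letters tagged $i$ and erasing their tags. Then $(w_0,\dots,w_{k-1}) \in \Lang(\mathcal{C})$ if and only if some $u \in \Lang(\mathcal{C}')$ has $\pi_i(u) = w_i$ for all $i$. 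Here $\mathcal{C}'$ is simply $\mathcal{C}$ viewed as an NFA over $\Gamma$: it has the same states, initial state and final states, and each transition $(q,(\sigma,i),q')$ becomes a $\Gamma$-labelled transition. This NFA has $n_{\mathcal{C}}$ states and $m_{\mathcal{C}}$ transitions. The equivalence is just the definition of acceptance for $k$-tape automata: a run of $\mathcal{C}$ that ends with every head at its endmarker consumes exactly $w_i$ from tape $i$.

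Next, for each $i \in [k]$ we build $\A_i'$ over $\Gamma$ recognising $\pi_i^{-1}(\Lang(\A_i))$. Its states are those of $\A_i$, with the same initial and final states. Each transition $(p,\sigma,q)$ of $\A_i$ becomes $(p,(\sigma,i),q)$, and every state gets a self-loop on $(\sigma,j)$ for every $j \neq i$ and $\sigma \in \Sigma$. A run of $\A_i'$ on $u$ therefore ignores foreign-tagged letters and simulates $\A_i$ on $\pi_i(u)$, which gives $\Lang(\A_i') = \pi_i^{-1}(\Lang(\A_i))$.

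Correctness follows by chaining these facts: the NFA $\A_0',\dots,\A_{k-1}',\mathcal{C}'$ accept a common word $u$ if and only if $u \in \Lang(\mathcal{C}')$ and $\pi_i(u) \in \Lang(\A_i)$ for all $i$, which holds if and only if there are words $w_i = \pi_i(u) \in \Lang(\A_i)$ with $(w_0,\dots,w_{k-1}) \in \Lang(\mathcal{C})$. For the backward direction, given such $w_i$ with an accepting run of $\mathcal{C}$, take $u$ to be the sequence of tagged letters read along that run.

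The step I expect to need care is the size bound. The self-loops add $(k-1)\ell\, n$ transitions to each $\A_i'$. This is $O(m)$ only if $k\ell n = O(m)$, which does not follow from the standing assumption $m \ge n$. I would handle this by encoding each letter $(\sigma,j)$ in binary over a fixed small alphabet with separators, and replacing the self-loops by a shared ``skip gadget'' per state. If that gets too intricate, the fallback is to read the $\BigO(m)$ bound as treating $k$ and $\ell$ as constants, as the reduction time $\BigO(k^2(n+m+\ell))$ suggests. That time bound is what one gets by copying and relabelling each automaton $k$ times over $\Gamma$, so I would aim to match it by accounting for the loops within $\BigO(k\cdot k\ell\, n)$.
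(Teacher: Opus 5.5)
Your construction is the paper's: the same tagged alphabet $\Gamma = \Sigma \times [k]$, the same automata $\A_i'$ (the paper's $\mathcal{B}_i$) with self-loops on all foreign-tagged letters, and the same relabelling of $\mathcal{C}$ as a one-tape NFA over $\Gamma$, with correctness argued via the projections $\pi_i$ where the paper speaks of ordinary shuffles. Your worry about the $(k-1)\ell n$ self-loops is well founded and applies equally to the paper, whose $\BigO(m)$ transition bound and time bound hold only when $k$ and $\ell$ are treated as constants (matching its remark that the inter-reducibility is for a fixed alphabet). So take your fallback rather than the binary-encoding gadget: that gadget adds $\Theta(\log(k\ell))$ intermediate states per state, which breaks the $\BigO(n)$ and $\BigO(n_{\mathcal{C}})$ state bounds and gains nothing.
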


Hence, many of our results on NFA $k$-IE transfer to the NFA $k$-RS problem immediately. 

\begin{corollary} The following statements hold:

\begin{enumerate}
    \item         Given $k$ NFA with at most $n$ states and $m$ transitions each and a $k$-tape automaton with $n_{\mathcal{C}}$ states and $m_{\mathcal{C}}$ transitions, NFA $k$-RS has an {$\BigO(k \cdot n^{k-1} (n_{\mathcal{C}} m + nm_{\mathcal{C}}))$}-time algorithm. 
\item For all $\varepsilon > 0$, if there exists a $\BigO((k \cdot n^{k}n_c)^{1-\varepsilon})$ time algorithm for NFA $k$-RS, then SETH is false. 
\item     NFA $k$-RS belongs to {$\mathsf{NTIME}(k \ell \cdot n^{k + \omega-2 }n_{\mathcal{C}}) \cap \mathsf{coNTIME}(k \cdot n^k n_{\mathcal{C}})$}
\item If there is a fine-grained reduction such that, given SETH, NFA $k$-RS cannot be solved faster than $k \ell  \cdot n^{k + \omega-2+\gamma }n_{\mathcal{C}}$ for some $\gamma > 0$, then NSETH is false. 

\end{enumerate}

\end{corollary}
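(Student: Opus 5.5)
Throughout, the plan is to transfer the NFA $(k+1)$-IE results to NFA $k$-RS via \Cref{thm:kRStoIE} (and, for the lower bound, via \Cref{lem:EqRelation}). Given an instance $(\knfa, \mathcal{C})$ of NFA $k$-RS, \Cref{thm:kRStoIE} produces, in $\BigO(k^2(n+m+\ell))$ time, $k+1$ NFA over an $\BigO(k\ell)$-sized alphabet: $k$ of them with $\BigO(n)$ states and $\BigO(m)$ transitions, and one (call it $\mathcal{A}_{\mathcal{C}}$) with $\BigO(n_{\mathcal{C}})$ states and $\BigO(m_{\mathcal{C}})$ transitions. Their languages intersect iff the original instance is satisfied.

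\textbf{Item 1.} I would not apply \Cref{thm:UB} with uniform parameters, because that would lose the asymmetry between the two kinds of automata. Instead I would redo the transition count of \Cref{thm:nodding} with non-uniform sizes:
\begin{equation*}
|\Delta| = \sum_i \sum_\sigma \Bigl(\prod_{j\ne i}|Q_j|\Bigr)\,|\Delta_i(\sigma)|.
\end{equation*}
Each of the $k$ ordinary components contributes $n^{k-1} n_{\mathcal{C}}\, m$, and the $\mathcal{A}_{\mathcal{C}}$ component contributes $n^k m_{\mathcal{C}}$. This gives $\BigO(k\cdot n^{k-1}(n_{\mathcal{C}} m + n m_{\mathcal{C}}))$. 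Breadth-first search on the accessible part of $\nodprod$ then decides emptiness within this time. The reduction cost is dominated by this bound.

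\textbf{Item 2.} I would go the other way. By \Cref{lem:EqRelation}, NFA $k$-IE (even DFA $k$-IE) reduces to NFA $k$-RS with an equality $k$-tape automaton of size $n_{\mathcal{C}} = \BigO(k\ell)$, which is constant for fixed $k$ and $\ell$. So an $\BigO((k n^k n_{\mathcal{C}})^{1-\varepsilon})$ algorithm for $k$-RS would give an $\BigO(n^{k(1-\varepsilon)})$ algorithm for DFA $k$-IE over a fixed alphabet. This contradicts SETH by Wehar's result~\cite[Cor.~7.19]{wehar2017complexity}. One has to check that Wehar's hard instances use a fixed alphabet (or $n$-independent $\ell$), so that $n_{\mathcal{C}}$ stays constant; this is the point I would verify.

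\textbf{Item 3.} For $\mathsf{coNTIME}$, I would use short pathsets for the $k+1$ reduced automata. A shortest common word has length at most the number of product states, $\BigO(n^k n_{\mathcal{C}})$. Checking each transition by adjacency-matrix lookup, as in \Cref{thm:Negative certificates}, gives $\BigO(k\cdot n^k n_{\mathcal{C}})$ after pre-building the matrices.

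For $\mathsf{NTIME}$, I would use staggered cuts of $\nodprod$ for the $k+1$ automata and verify them with the In/Out matrices of \Cref{lem:Positive certificates}. For an ordinary component $p$, $\text{Out}_{p,\sigma}\cdot\Delta_p(\sigma)$ is an $(n^{k-1}n_{\mathcal{C}})\times n$ by $n\times n$ product. Splitting it into $n^{k-2}n_{\mathcal{C}}$ square blocks costs $n^{k+\omega-2}n_{\mathcal{C}}$ per letter and component. Over $\BigO(k\ell)$ letters this gives the claimed bound, up to a factor of $k$.

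The main obstacle is the $\mathcal{A}_{\mathcal{C}}$ component, whose $n_{\mathcal{C}}\times n_{\mathcal{C}}$ matrix does not fit this block scheme when $n_{\mathcal{C}}\gg n$. I would first try to avoid matrix multiplication there and check its volleys by direct scanning over the $\mathcal{C}$-transitions. I would then exploit that in the reduction of \Cref{thm:kRStoIE} each letter $(\sigma,i)$ of the new alphabet is read by $\mathcal{A}_{\mathcal{C}}$ only when the volley for component $i$ fires. This should let the $\mathcal{C}$-move be merged into the volley of the corresponding $\A_i$, with $\mathcal{C}$ used as the ``announcing'' component, so that only $n\times n$ matrices are ever multiplied and the $\mathcal{C}$-state is carried as an index of the rows. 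If that merging fails, I would settle for rectangular block multiplication and record the extra cost.

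\textbf{Item 4.} Item 3 puts NFA $k$-RS in $\mathsf{NTIME}\cap\mathsf{coNTIME}$ of $\BigO(k\ell\cdot n^{k+\omega-2}n_{\mathcal{C}})$. \Cref{thm:SETHnotNSETH} then applies verbatim, exactly as in the proof of \Cref{cor:NFAnotSETH}.
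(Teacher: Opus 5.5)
Your plan is the transfer the paper has in mind. The paper gives no argument for this corollary beyond the remark that the NFA $k$-IE results ``transfer immediately'' through \Cref{lem:EqRelation} and \Cref{thm:kRStoIE}. Items 1 and 2 and the $\mathsf{coNTIME}$ half of item 3 go through along your lines. The fixed-alphabet check you flag for item 2 is the same assumption the paper makes in its Discussion.

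Two small repairs are needed:
\begin{itemize}
\item The automata $\mathcal{B}_i$ of \Cref{thm:kRStoIE} carry $(k-1)\ell n$ extra self-loops, so ``$O(m)$ transitions'' is really $O(m+k\ell n)$. To land exactly on the bound of item 1, drop the identity volleys: in the $(\sigma,i)$-petal only $\mathcal{C}$ and $\A_i$ need to move. This gives $n^k m_{\mathcal{C}} + k n^{k-1} n_{\mathcal{C}} m$ transitions, and it also removes the spare factor $k$ in your item 3 count.
\item For $\mathsf{coNTIME}$, let the certificate point to the transitions it uses. Pre-building the adjacency matrices of $\mathcal{C}$ alone costs $k\ell n_{\mathcal{C}}^2$.
\end{itemize}

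The genuine gap is the $\mathsf{NTIME}$ half of item 3, and item 4, which rests on it. Your merging idea does not close it.

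Whichever volley advances the $\mathcal{C}$-component, the staggered-cut condition for it says the following: for each of the $n^k$ tuples of ordinary states, the $\Delta_{\mathcal{C}}((\sigma,i))$-image of one subset of $Q_{\mathcal{C}}$ is contained in another. That is a Boolean product of an $n^k\times n_{\mathcal{C}}$ matrix with an $n_{\mathcal{C}}\times n_{\mathcal{C}}$ matrix. None of your proposed changes removes it:
\begin{itemize}
\item making $\mathcal{C}$ the announcer leaves this check unchanged;
\item indexing rows by the $\mathcal{C}$-state only helps in volleys where $\mathcal{C}$ stays put;
\item fusing the moves of $\mathcal{C}$ and $\A_i$ replaces $\Delta_i(\sigma)$ by $\Delta_i(\sigma)\otimes\Delta_{\mathcal{C}}((\sigma,i))$, an $nn_{\mathcal{C}}\times nn_{\mathcal{C}}$ matrix, which is worse.
\end{itemize}
With square blocks the product costs about $n^k n_{\mathcal{C}}^{\omega-1}$ per letter. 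That meets the per-letter target $n^{k+\omega-2}n_{\mathcal{C}}$ only when $n_{\mathcal{C}} = O(n)$.

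No certificate can do better in general. Fix $k,\ell$, let every $\A_i$ be a one-state automaton accepting $\Sigma^*$, and let $\mathcal{C}$ have $\Theta(n_{\mathcal{C}}^2)$ transitions among states reachable from its initial state, with its final state unreachable. A verifier that accepts this unsatisfiable instance within $O(k\ell n_{\mathcal{C}})$ steps must leave one of these transitions unread. Redirecting that transition to the final state makes the instance satisfiable without changing the accepting computation.

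So the $\mathsf{NTIME}$ bound needs an extra hypothesis such as $n_{\mathcal{C}} = O(n)$, under which your block argument works verbatim, or an additional term for the $\mathcal{C}$-volleys. Your fallback of recording the extra cost is the right outcome. The paper's one-line transfer glosses over exactly this point.
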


\subsection{Proof of \Cref{thm:kIEtoRS} and \Cref{thm:kRStoIE}: from NFA $(k+1)$-IE to NFA $k$-RS and back}

\subsubsection*{Reducing NFA $(k+1)$-IE to NFA $k$-RS}
As previously mentioned, the NFA $k$-IE problem can be seen as a special case of the NFA $k$-RS problem, with a fixed $k$-tape automaton for any given $k$. 
But we can go further and reduce NFA $(k+1)$-IE by keeping the first $k$ NFA and using the $k$-tape automaton to capture both the equality relation and the last NFA's language. 

\begin{proof}[Proof  of \Cref{thm:kIEtoRS} (Sketch)]
Given NFA $\A_0, \A_1, \dots , \A_k$, we construct an NFA $k$-RS instance with the $k$ NFA \knfa and a $k$-tape automaton $\mathcal{C}$ that we construct using $\A_k$. For all $i \in [k+1]$, let $\A_i = (Q_i,\Sigma, \Delta_i, s_i, F_i )$. $$\mathcal{C} = (Q_k \cup Q_k\times \Sigma \times \{1, \dots , k-1\}, \Sigma, \Delta_{\mathcal{C}}, s_k, F_k)$$

The set of states of $\mathcal{C}$ consists of a base copy of $Q_k$, as well as $(k-1)\ell$ copies of $Q_k$, each labelled by a letter $\sigma \in \Sigma$ and an index between $1$ and $k-1$. The initial and final states are the same as those of $\A_k$, in the base copy. The transition relation $\Delta_{\mathcal{C}}$ consists of transitions: \begin{enumerate}
    \item $(q, (\sigma,0), (q, \sigma, 1)) \text{ for all } q \in Q_k \text{ and } \sigma \in \Sigma.$ 
    \item $((q, \sigma, i), (\sigma,i), (q, \sigma, i+1)) \text{ for all } q \in Q_k,\, i \in \{1, \dots , k-2\} \text{ and } \sigma \in \Sigma $.
    \item $((q, \sigma, k-1), (\sigma,k-1), q') \text{ for all } (q, \sigma, q') \in \Delta_k$.
\end{enumerate}

Suppose $\A_0, \A_1, \dots , \A_k$ have a word $w$ in their intersection. Then, by definition, $w \in \Lang(\A_i)$ for each $i \in [k]$, but moreover since $w \in \Lang(\A_k)$, there is an accepting run $\rho$ from $s_k$ to $f \in F_k$ in $\A_k$ labelled by $w$. There is a corresponding run in $\mathcal{C}$, where each transition $q \xrightarrow[]{\sigma} q'$ in $\rho$ is replaced by the path $q \xrightarrow[]{(\sigma, 0)} (q, \sigma, 1) \xrightarrow[]{(\sigma, 1)}  \dots \xrightarrow[]{(\sigma, k-2)} (q, \sigma, k-1)\xrightarrow[]{(\sigma, k-1)} q'$. This implies that $(w, w, \dots , w)$ is accepted by $\mathcal{C}$, and hence $\A_0, \dots , \A_{k-1}$ satisfy the relation $\Lang(\mathcal{C})$.                             

Conversely, suppose there exists a $k$-tuple $(w_0, \dots , w_{k-1}) \in \Lang(\mathcal{C})$ such that $w_i \in \Lang(\A_i)$ for all $i \in [k]$. Note that $\mathcal{C}$ only accepts $k$-tuples where all the elements are the same, so there is some word $w$ such that $w_i = w$ for all $i\in [k]$. Moreover, the $k-1 \bmod k$ projection of the accepting run shows that the content of the last tape labels a path from $s_k$ to some $f \in F_k$, in $\A_k$, so $w \in \Lang(\A_k)$ as well. Hence, the languages of $\A_0, \dots , \A_k$ have nonempty intersection, since $w$ is accepted by all of them. 
\end{proof}

\subsubsection*{Reducing $k$-RS NFA to NFA  $(k+1)$-IE}
Interestingly, given an instance of NFA $k$-RS, we can reduce it to deciding NFA $(k+1)$-IE as well. 

Given $k$ NFA \knfa and a $k$-tape automaton $\mathcal{C}$, we reduce the NFA $k$-RS problem to the NFA $(k+1)$-IE problem for NFA $\mathcal{B}_0 , \dots , \mathcal{B}_{k}$.

The idea of the reduction is transcribing the moves the $k$-tape automaton $\mathcal{C}$ makes on its tapes,  with the additional constraint that the contents of the $i$th tape must belong to $\Lang(\A_i)$ for each $i\in [k]$. To do so, we modify the $k$ input NFA and the $k$-tape automaton $\mathcal{C}$ so that their intersection captures a shuffled (not interleaved) version of the $k$ components of the relation $(\Lang(\A_0) \times \dots \times \Lang(\A_{k-1})) \cap \Lang(\mathcal{C})$.  

The NFA we construct will use the alphabet $\Gamma = \Sigma \times [k]$. 
Intuitively,  $(\sigma, i)$ corresponds to $\A_i$ reading $\sigma$, as well as to $\mathcal{C}$ reading $\sigma$ in the current cell of the $i$th tape. For each $i\in [k]$, let $\gamma_i: \Sigma^* \to \Gamma^*$ be the homomorphism that maps each letter $\sigma \in \Sigma$ to $(\sigma, i)$. For each $i\in [k]$, the function $\gamma_i$ extends to languages in the natural way. 

In order to reduce the NFA $k$-RS problem to the NFA $(k+1)$-IE problem, we would like the new NFA $\mathcal{B}_0, \dots \mathcal{B}_k$ to have the following property: $$\bigcap_{i \in [k+1]} \Lang(\mathcal{B}_i) = \emptyset \iff (\Lang(\A_0) \times \dots \times \Lang(\A_{k-1})) \cap \Lang(\mathcal{C}) = \emptyset.$$ 

Let the \emph{ordinary shuffle}~\cite{HenshallRS12} of two languages $L_0$ and $L_1$ over an alphabet $\Sigma$ refer to the language: $$
    \{x_0y_0 x_1y_1 \dots x_{n-1}y_{n-1} \mid n\geq 0 \text{ and  for all } i \in [n],\, x_i, y_i \in \Sigma^*,\, x_0x_1\dots x_{n-1} \in L_0,\, \text{and } y_0y_1\dots y_{n-1}\in L_1 \}.
$$ The ordinary shuffle of $k$ languages can be defined analogously. 

\begin{proof}[Proof (Sketch) of \Cref{thm:kRStoIE}]
For all $i \in [k]$, the NFA $\mathcal{B}_i$ has the same state space as $\A_i$, namely $Q_i$, but its transitions are slightly different. 
$$\mathcal{B}_i = (Q_i, \Gamma, \Delta'_i, s_i, F_i )$$
The initial state and final states are the same as those of $\mathcal{A}_i$. The transition relation of $\mathcal{B}_i$ contains: 
\begin{itemize}
    \item for all $j \neq i$, $q \xrightarrow[]{(\sigma, j) }q$ for all $q \in Q_i$ and $\sigma \in \Sigma$ and
    \item $q \xrightarrow[]{(\sigma, i) }q'$ whenever $q \xrightarrow[]{\sigma} q'$ in $ \A_i$.
\end{itemize} 

For all $i \in [k]$, the language the NFA $\Lang(\mathcal{B}_{i})$ is the ordinary shuffle of two languages, $\gamma_i(\Lang(\A_i))$ and $(\Sigma \times ([k]\setminus \{i\}))^*$. Hence, all the letters labelled by $i$ in any word $w \in \Lang(\mathcal{B}_{i})$ spell a word belonging to $\Lang(\A_i)$, with no restriction on the other letters. 

The NFA $\mathcal{B}_{k}$ modifies the original $k$-tape automaton $\mathcal{C}$. $\mathcal{B}_k$ is a one-tape automaton that reads $(\sigma,i)$ whenever $\mathcal{C}$ reads $\sigma$ on the $i$th tape. The NFA $\mathcal{B}_{k}$ is defined as follows: $$\mathcal{B}_{k} = (Q_{\mathcal{C}}, \Gamma, \Delta_{k}, s_{\mathcal{C}}, F_{\mathcal{C}}).$$ 
Once again, the state space, initial state, and final states are the same as $\mathcal{C}$. For each transition $q \xrightarrow[]{(\sigma,i)} q'$ in $\mathcal{C}$ (reading the letter $\sigma$ on the $i$th tape), we have a transition $q \xrightarrow[]{(\sigma, i)} q'$ in $\Delta_{k}$ (reading the letter $(\sigma, i)$). 

The language $\Lang(\mathcal{B}_{k})$ contains a word $w = (\sigma_0, i_0)(\sigma_1, i_1) \dots (\sigma_{p-1}, i_{p-1})$ whenever there exists a tuple $(w_0, \dots , w_{k-1})$ that had an accepting run $t_0 t_1 \dots t_{p-1}$ in $\mathcal{C}$, where for all  $j \in [p]$, $t_{j}$ is a transition that instructs $\mathcal{C}$ to read $\sigma_j$ on the $i_j$th tape. This implies that $w$ belongs to the ordinary shuffle of the $k$ languages $\{\gamma_i(w_i)\} \subseteq \Gamma^*$, $i \in [k]$. 

The intersection $\bigcap_{i \in [k]} \Lang(\mathcal{B}_{i})$ is the ordinary shuffle of the $k$ languages $\gamma_i(\Lang(\A_i))$ for all $i \in [k]$. Moreover, \begin{multline*} \qquad\Lang(\mathcal{B}_{k}) = \{w \mid w \text{ belongs to the ordinary shuffle of  the $k$ languages } \\ \{\gamma_i(w_i)\} \subseteq \Gamma^*
         \text{ for all }i \in [k] \text{, for all }(w_0, \dots , w_{k-1}) \in \Lang(\mathcal{C})\}.
         \end{multline*} Hence, the new NFA $\mathcal{B}_0, \dots , \mathcal{B}_k$ have a nonempty intersection exactly when, for each $i \in [k]$, there are words $w_i \in \Lang(\A_i)$ such that $(w_0, \dots , w_{k-1})$ had an accepting run in the $k$-tape automaton $\mathcal{C}$. 
\end{proof}